\documentclass[english,11pt]{article}

\usepackage{geometry}
\usepackage[sort,comma]{natbib}
\usepackage[pagebackref=false,colorlinks,citecolor=blue,urlcolor=blue,pdffitwindow=true,linkcolor=blue]{hyperref}
\pagenumbering{arabic}
\usepackage{amsmath}
\usepackage{bbm}
\geometry{left=4cm,right=4cm,top=3cm, bottom=3cm}
\usepackage{multirow}
\usepackage{graphicx}
\DeclareGraphicsExtensions{.pdf,.jpg,.png}
\usepackage{subfigure}

\usepackage{amsfonts}
\usepackage{mathdots}

\usepackage{amsthm}
\newtheorem{remark}{Remark}

\usepackage{multirow}

\usepackage{centernot}

\usepackage[font={footnotesize,it}, sc, bf]{caption}
\setlength{\captionmargin}{20pt}
\captionsetup{labelsep=period}


\renewenvironment{abstract}{%
  \noindent\textbf\abstractname .\hspace{1pt}
}{%
  \endlist \par\bigskip\bigskip
}

\usepackage{lastpage}
\RequirePackage[hyperpageref]{backref}
\renewcommand*{\backref}[1]{} 
\renewcommand*{\backrefalt}[4]{
    \ifcase #1
       No referred.
    \or
       \emph{Referred to on page #2.}
    \else
       \emph{Referred to on pages #2.}
    \fi}

\newtheorem{lemma}{Lemma}
\newtheorem{proposition}{Proposition}
\newtheorem{assumption}{Assumption}
\newtheorem{corollary}{Corollary}

\usepackage[shortlabels]{enumitem}

\usepackage[titletoc,title]{appendix}

\binoppenalty=\maxdimen
\relpenalty=\maxdimen

\begin{document}

\title{\null\vspace{-60pt}{\bf Fast Bayesian inference in large Gaussian graphical models}\bigskip}
\date{}
\author{Gwena\"el G.R. Leday and Sylvia Richardson \\[5pt]
\emph{MRC Biostatistics Unit, University of Cambridge, UK}}

{\maketitle}
\begin{center}

\end{center}
\bigskip\bigskip

\begin{abstract}
Despite major methodological developments, Bayesian inference for Gaussian graphical models remains challenging in high dimension due to the tremendous size of the model space. This article proposes a method to infer the marginal and conditional independence structures between variables by multiple testing of hypotheses. Specifically, we introduce closed-form Bayes factors under the Gaussian conjugate model to evaluate the null hypotheses of marginal and conditional independence between variables. Their computation for all pairs of variables is shown to be extremely efficient, thereby allowing us to address large problems with thousands of nodes. Moreover, we derive exact tail probabilities from the null distributions of the Bayes factors. These allow the use of any multiplicity correction procedure to control error rates for incorrect edge inclusion. We demonstrate the proposed approach to graphical model selection on various simulated examples as well as on a large gene expression data set from The Cancer Genome Atlas.
\end{abstract}

\section{Introduction}

Graphical models provide a natural basis for the statistical description and analysis of interplay between variables. In applications, interest often lies in the bidirected and undirected graphs that respectively describe the marginal and conditional dependence structures among variables \citep{wang2015}. When the joint distribution of the variables is assumed to be Gaussian, these are known to be fully coded in the covariance matrix $\Sigma = \{ \sigma_{ij} \}$ and its inverse $\Omega = \{ \omega_{ij} \}$ \citep{dempster1972, cox1993}. Precisely, a pair $(i,j)$ of variables with $1\leq i<j\leq p$, will be marginally independent when $\sigma_{ij} = 0$ and conditionally independent (given all the remaining variables) when $\omega_{ij} = 0$. The present article treats inference of both types of graphs in context of the Gaussian model when the number of variables $p$ is large.

Despite major methodological developments, Bayesian inference for Gaussian graphical models remains challenging. The standard approach casts the problem as a model selection problem, and first requires specification of prior distributions over all possible graphical models and their parameter spaces. Such specification is not straightforward as it is desirable to favour parsimonious models and address the compatibility of priors across models \citep{carvalho2009, consonni2012}. Next, the inference procedure is hindered by the search over a very high-dimensional model space where the number of possible graphical models grows super-exponentially with the number of variables. Full exploration of the model space is, therefore, only possible when the number of variables is very small (say $p\leq 10$). In moderate- and high-dimensional settings where $p$ is in the tens, hundreds or thousands, the model space must generally be searched stochastically \citep{mohammadi2015,wang2012,lenkoski2011,giudici1999}. However, due to the tremendous size of the model space in such settings, it may be difficult (nay impossible) to identify with confidence the graphical model that is best supported by the data. Accordingly, it has become common practice to account for model uncertainty by performing Bayesian model averaging and infer the graphical structure by selecting edges with the highest marginal posterior probabilities, for example by exploiting their connection to a Bayesian version of the false discovery rate \citep{mitra2013, peterson2015, baladandayuthapani2014}.

To tackle the difficulties associated with the standard approach this article proposes a method to directly select edges by multiple testing of hypotheses about pairwise (marginal or conditional) independence \citep{drton2007} using closed-form Bayes factors. These are obtained using the conditional approach of \citet{dickey1971}, in which the prior under the null hypothesis is derived from that of the alternative by conditioning on the null hypothesis. This approach was also adopted by \citet{giudici1995} to derive a closed-form Bayes factor for conditional independence. However, the latter relies on elements of the inverse of the sample covariance matrix which is singular when the number of variables is large relative to the sample size. We bypass this issue and also introduce a closed-form Bayes factor for marginal independence. Moreover, we show the consistency of the Bayes factors and derive exact tail probabilities from their null distributions to help address the multiplicity problem and control error rates for incorrect edge inclusion. The proposed procedure, available via the R package beam on the CRAN website \citep[http://cran.r-project.org]{R2016}, is shown to be computationally very efficient, addressing problems with thousands of nodes in just a few seconds.

The article is structured as follows. The next section introduces basic notations and the Gaussian conjugate model. In section~\ref{BF} we present closed-form Bayes factors to evaluate the null hypotheses of marginal and conditional independence between any two variables and study their consistency. Section~\ref{multest} details graph inference and discuss the multiple testing problem and error control. The performance of the proposed approach is compared to Bayesian and non-Bayesian methods on simulated data in section~\ref{num}. Section~\ref{app} illustrates our method on a large gene expression data set from The Cancer Genome Atlas.

\newpage
\section{Background}
\label{bg}

\subsection{Notation}
\label{bg:not}

The following notation will be used throughout this paper.
We employ the notation $x \mid \mu, \Sigma \sim N_p (\mu, \Sigma)$ to say that the random column vector $x\in\mathbb{R}^p$,
has a multivariate normal distribution with mean $\mu\in\mathbb{R}^p$ and positive definite covariance matrix $\Sigma$.
We also write $\Omega \mid A, \alpha \sim W_d(A, \alpha)$ to indicate that the $d\times d$ random matrix $\Omega$ with density
\begin{equation*}
	(2)^{-\frac{\alpha d}{2}} \Gamma_d^{-1}\left(\frac{\alpha}{2}\right) \vert A \vert^{-\frac{\alpha}{2}} \vert \Omega \vert^{\frac{\alpha-d-1}{2}} \exp\left\lbrace -\frac{1}{2} \text{tr}( A^{-1} \Omega) \right\rbrace,
\end{equation*}
has a Wishart distribution with scale matrix $A$ and degrees of freedom $\alpha>d+1$. Here $\mid \Omega \mid$ represents the determinant of $\Omega$, $\text{tr}(A)$ denotes the trace of matrix $A$ and $\Gamma_d\left(x\right) = \pi^{\frac{d(d-1)}{4}}\prod_{i=1}^{d}{\Gamma\left( x + (1-i)/2 \right)}$ is the multivariate gamma function. The inverse of $\Omega$ is said to have an Inverse-Wishart distribution with scale matrix $A^{-1}$ and $\alpha$ degrees of freedom. We shall use the notation $\Omega^{-1} \mid A^{-1}, \alpha \sim IW_d(A^{-1}, \alpha)$. A random variable $\beta$ following a beta distribution with shape parameters $b_1$ and $b_2$ will be denoted by $\beta \sim Beta(b_1,b_2)$.
We use the operator $vec$ to denote the linear transformation that stacks the columns of a matrix into a vector and $\otimes$ to denote the Kronecker product. We refer to \citet{gupta2000} for more details on these operators.
Last, we shall use the subscripts $aa$, $bb$, $ab$ and $ba$ to refer to the submatrices $\Sigma_{aa}$, $\Sigma_{bb}$, $\Sigma_{ab}$ and $\Sigma_{ba}$ of a $p\times p$ symmetric matrix $\Sigma$ whose block-wise decomposition is implied by a partition of its rows and columns into two disjoint subsets indexed by $a\subset \{1,\ldots,p\}$ and $b=\{1,\ldots,p\} \setminus a$.

\subsection{The Gaussian conjugate model}
\label{bg:GCmodel}

Given an $n \times p$ observation matrix $Y = (Y_1, \ldots, Y_p)$ the Gaussian conjugate model is defined by
\begin{equation}
\label{GC}
	\begin{split}
		\text{vec}(Y) \mid \Sigma \sim N_{np} (0, \Sigma \otimes I_n)\quad\text{and}\quad \Sigma \mid D, \delta \sim  IW_p \left( \left(\delta-p-1\right) D,\delta \right)
	\end{split},
\end{equation}
with $D$ positive definite, $I_n$ the $n$-dimensional identity matrix and $\delta>p+1$. Here, the covariance matrix with kronecker product structure makes explicit the assumption of independence for the rows of $Y$ and the dependence of its columns via the covariance $\Sigma$.

Due to conjugacy, model~\eqref{GC} offers closed-form Bayesian estimators of the covariance matrix $\Sigma$ and its inverse $\Omega=\Sigma^{-1}$. The posterior expectation of $\Sigma$ is
\begin{equation}
\label{expectSigma}
		E\left(\Sigma \mid Y\right) = \left\lbrace\left(\delta-p-1\right) D+S\right\rbrace/(\delta+n-p-1),
\end{equation}
where $S=Y^T Y$, and that of its inverse is
\begin{equation}
\label{expectOmega}
	E\left(\Omega \mid Y\right) = (\delta+n) \left\lbrace\left(\delta-p-1\right) D+S\right\rbrace^{-1}.
\end{equation}

It is important to note that estimator~\eqref{expectSigma} is a linear shrinkage estimator that is a convex linear combination between the maximum likelihood estimator $\widehat{\Sigma}_{\text{mle}} = n^{-1}S$ of $\Sigma$ and $E\left(\Sigma\right) =  D$, with weight $\alpha = (\delta-p-1)/(\delta+n-p-1) \in (0,1)$ \citep{chen1979,hannart2014}. Likewise, estimator~\eqref{expectOmega} is recognized as a ridge-type estimator of the precision matrix \citep{kubokawa2008, vanWieringen2016}. The next proposition presents some properties of these two estimators. All proofs are presented in the Appendix.

\begin{proposition}
\label{prop1}
	Let estimators \eqref{expectSigma} and \eqref{expectOmega} depend on $\delta$ with $D$, $n$ and $p$ fixed, and denote them by $\widehat{\Sigma}_\delta$ and $\widehat{\Omega}_\delta$, respectively. Then the following properties hold:
	\begin{enumerate}[(i),leftmargin=*]
		\item $\lim_{\delta\rightarrow \infty} \widehat{\Sigma}_\delta = D$ 
		\item $\lim_{\delta\rightarrow \infty} \widehat{\Omega}_\delta = D^{-1}$
		\item $\lim_{\delta\rightarrow p+1} \widehat{\Sigma}_\delta = \widehat{\Sigma}_{\text{mle}}$ 
		\item $\lim_{\delta\rightarrow p+1} \widehat{\Omega}_\delta = \left\lbrace(n+p+1)/n\right\rbrace\widehat{\Sigma}_{\text{mle}}^{-1}$, if $n>p$,
		\item $\widehat{\Sigma}_\delta$ and $\widehat{\Omega}_\delta$ \text{are positive definite}
	\end{enumerate}
\end{proposition}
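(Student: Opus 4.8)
The plan is to treat each of the five properties by direct algebraic manipulation of the two closed forms, relying on continuity of scalar multiplication, matrix addition, and (where an inverse appears) of matrix inversion on the cone of positive definite matrices. Since $n$, $p$ and $D$ are held fixed, everything reduces to elementary limits in the single scalar $\delta$ together with standard facts about positive definite matrices.

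For the two limits as $\delta \to \infty$, parts (i) and (ii), I would divide the numerator and denominator of $\widehat{\Sigma}_\delta = \{(\delta-p-1)D+S\}/(\delta+n-p-1)$ by $\delta$ and take the termwise limit: the coefficient of $D$ tends to $1$, the $S/\delta$ term vanishes, and the denominator tends to $1$, giving $\lim_{\delta\to\infty}\widehat{\Sigma}_\delta = D$. For (ii) I would rewrite $\widehat{\Omega}_\delta = [\{(\delta-p-1)D+S\}/(\delta+n)]^{-1}$, observe by the same termwise argument that the bracketed matrix tends to $D$, and then invoke continuity of the map $X \mapsto X^{-1}$ at the positive definite matrix $D$ to conclude $\widehat{\Omega}_\delta \to D^{-1}$.

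For parts (iii) and (iv) I would simply substitute $\delta = p+1$ into the continuous expressions. The term $(\delta-p-1)D$ then vanishes, so $\widehat{\Sigma}_{p+1} = S/n = n^{-1}S = \widehat{\Sigma}_{\text{mle}}$, while the factor $\delta+n = n+p+1$ in $\widehat{\Omega}_\delta$ leaves $\widehat{\Omega}_{p+1} = (n+p+1)S^{-1}$. Rewriting $S^{-1} = n^{-1}\widehat{\Sigma}_{\text{mle}}^{-1}$ yields the stated $\{(n+p+1)/n\}\widehat{\Sigma}_{\text{mle}}^{-1}$. The one point requiring care is that $S = Y^T Y$ must be invertible for (iv) to make sense, which is exactly what the hypothesis $n>p$ guarantees (generically $S$ then has full rank $p$); this is why the condition is attached only to (iv).

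For positive definiteness, part (v), I would note that $\delta > p+1$ forces $\delta-p-1 > 0$ and $\delta+n-p-1 > 0$, so $(\delta-p-1)D$ is positive definite as a positive scalar times the positive definite $D$, while $S = Y^T Y$ is positive semidefinite; hence their sum is positive definite. Dividing by the positive scalar $\delta+n-p-1$ preserves this, giving (v) for $\widehat{\Sigma}_\delta$, and since $\delta+n>0$ and the inverse of a positive definite matrix is positive definite, the same follows for $\widehat{\Omega}_\delta$. Overall there is no genuine obstacle here; the only subtle points are where continuity of matrix inversion and the full rank of $S$ are actually invoked, so I would be careful to flag those two places explicitly.
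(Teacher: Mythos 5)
Your proof is correct and follows essentially the same elementary route as the paper: the paper merely packages the scalar limits via the convex-combination weight $\alpha_\delta = (\delta-p-1)/(\delta+n-p-1)$ and the identity $\widehat{\Omega}_\delta = \frac{\delta+n}{\delta+n-p-1}\widehat{\Sigma}_\delta^{-1}$, while you compute the same termwise limits directly and prove (v) by the same sum-of-positive-definite-and-positive-semidefinite argument. Your explicit flagging of the continuity of matrix inversion and of the (generic) invertibility of $S$ when $n>p$ is, if anything, slightly more careful than the paper's ``if $\widehat{\Sigma}_{\text{mle}}$ is positive definite'' aside.
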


Additionally, the asymptotic properties of estimators~\eqref{expectSigma} and~\eqref{expectOmega} when $n\rightarrow\infty$ are the same as those of the maximum likelihood estimators $\widehat{\Sigma}_{\text{mle}}$ and $\widehat{\Sigma}_{\text{mle}}^{-1}$ of $\Sigma$ and $\Omega$. Proposition~\ref{prop2} summarizes.

\begin{proposition}
\label{prop2}
	Let estimator~\eqref{expectSigma} and~\eqref{expectOmega} depend on $n$ with $D$, $\delta$ and $p$ be fixed, and denote them by $\widehat{\Sigma}_{n}$ and $\widehat{\Omega}_{n}$, respectively. Then the following properties hold:
	\begin{enumerate}[(i),leftmargin=*]
		\item $\lim_{n\rightarrow\infty} \widehat{\Sigma}_{n} = \widehat{\Sigma}_{\text{mle}}$
		\item $\lim_{n\rightarrow\infty} \widehat{\Omega}_{n} = \widehat{\Sigma}_{\text{mle}}^{-1}$
	\end{enumerate}
\end{proposition}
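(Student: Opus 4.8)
The plan is to exploit the convex-combination (shrinkage) representation of $\widehat{\Sigma}_n$ already recorded after equation~\eqref{expectOmega}, together with a single algebraic identity linking $\widehat{\Sigma}_n$ and $\widehat{\Omega}_n$, so that part (ii) follows from part (i) by continuity of matrix inversion. Throughout I read the claimed limits as asserting that the Bayesian estimator and the maximum likelihood estimator share the same large-$n$ behaviour, i.e.\ that their difference vanishes; since $S=Y^TY$ is itself a sum of $n$ terms, the only input needed is that $\widehat{\Sigma}_{\text{mle}}=n^{-1}S$ stabilises as $n\to\infty$ (by the strong law of large numbers $n^{-1}S\to\Sigma_0$ almost surely under the model, with $\Sigma_0$ the data-generating covariance), which in particular keeps it bounded.

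For part (i), I would set $\alpha_n=(\delta-p-1)/(\delta+n-p-1)$ and combine the two fractions to obtain the exact identity
\begin{equation*}
\widehat{\Sigma}_n-\widehat{\Sigma}_{\text{mle}} = \alpha_n\bigl(D-\widehat{\Sigma}_{\text{mle}}\bigr),
\end{equation*}
which is just the statement that $\widehat{\Sigma}_n=(1-\alpha_n)\widehat{\Sigma}_{\text{mle}}+\alpha_n D$. With $D$, $\delta$ and $p$ fixed, $\alpha_n\to0$ as $n\to\infty$; since $D-\widehat{\Sigma}_{\text{mle}}$ remains bounded, the right-hand side vanishes and (i) follows.

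For part (ii), I would first establish the algebraic identity $\widehat{\Omega}_n=\{(\delta+n)/(\delta+n-p-1)\}\,\widehat{\Sigma}_n^{-1}$, obtained by reading off $\widehat{\Sigma}_n^{-1}=(\delta+n-p-1)\{(\delta-p-1)D+S\}^{-1}$ from~\eqref{expectSigma} and comparing with~\eqref{expectOmega}; the inverses exist because $\widehat{\Sigma}_n$ is positive definite by Proposition~\ref{prop1}(v). The scalar prefactor tends to $1$ as $n\to\infty$, and by part (i) $\widehat{\Sigma}_n$ converges to the positive definite limit $\Sigma_0$; since matrix inversion is continuous on the open cone of positive definite matrices, $\widehat{\Sigma}_n^{-1}\to\Sigma_0^{-1}$, which is also the limit of $\widehat{\Sigma}_{\text{mle}}^{-1}$. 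Combining the two gives $\widehat{\Omega}_n\to\Sigma_0^{-1}$, as required.

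The main obstacle is conceptual rather than computational: one must be careful that ``$\lim_n\widehat{\Sigma}_n=\widehat{\Sigma}_{\text{mle}}$'' is not the limit of a fixed quantity, since $S$ grows with $n$. The clean way to handle this is to track the difference of the two estimators and to invoke boundedness (equivalently convergence) of $n^{-1}S$; the vanishing shrinkage weight $\alpha_n=O(n^{-1})$ then controls everything. For part (ii) the only additional care needed is to justify passing the limit through the inverse, which is exactly where positive-definiteness from Proposition~\ref{prop1}(v), together with positive-definiteness of the limit, is used.
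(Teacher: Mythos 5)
Your proof is correct and follows essentially the same route as the paper: the convex-combination representation $\widehat{\Sigma}_n=\alpha_n D+(1-\alpha_n)\widehat{\Sigma}_{\text{mle}}$ with $\alpha_n\rightarrow 0$ for (i), and the identity $\widehat{\Omega}_n=\{(\delta+n)/(\delta+n-p-1)\}\widehat{\Sigma}_n^{-1}$ with the scalar prefactor tending to $1$ for (ii). Your added care about the $n$-dependence of $\widehat{\Sigma}_{\text{mle}}$ (boundedness via the law of large numbers, and continuity of inversion at a positive definite limit) is a legitimate tightening of a step the paper's proof leaves implicit, but it is a refinement of the same argument, not a different one.
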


\subsection{Choice of hyperparameters}
\label{bg:hyper}

In model~\ref{GC}, the prior matrix $D$ represents the prior expectation of $\Sigma$. It may also be interpreted as the shrinkage target towards which the maximum likelihood estimator of the covariance matrix is shrunk, since the posterior expectation of $\Sigma$ is a linear shrinkage estimator. For these reasons, $D$ can be chosen to encourage estimator~\eqref{expectSigma} to have specific structures (e.g. autoregressives or low-ranks). Ideally, in such cases the matrix $D$ should be parameterised by a low-dimensional vector of hyperparameters that are interpretable and for which prior knowledge exists. As often this knowledge is absent, it is common to choose $D=I_p$. Throughout this paper we use $D=I_p$ and standardize the $n\times p$ observation matrix $Y$ so that for $1\leq j \leq p$, $Y_{j}^T 1_n = 0$ and $Y_{j}^T Y_{j}/n = 1$, where $1_n$ is an $n\times 1$ vector whose elements are all equal to 1.

The other hyperparameter $\delta$ clearly acts as a regularization parameter (see equation~\eqref{expectSigma} and~\eqref{expectOmega}) and its value must therefore be chosen carefully. Following \citet{chen1979} and \citet{hannart2014} we use empirical Bayes and estimate $\delta$ by the value maximizing the marginal (or integrated) likelihood of the model, i.e. by
\begin{equation*}
	\label{optLogML}
	\widehat{\delta} = \underset{\delta}{\arg\max}\ p(Y ; \delta),
\end{equation*}
where
\begin{equation*}
	\label{logML}
	\begin{split}
	p(Y ;\delta) = \pi^{-(np)/2} \frac{\Gamma_p\left( \frac{\delta+n}{2} \right)}{\Gamma_p\left(\frac{\delta}{2}\right)} \frac{\mid \left(\delta-p-1\right) D \mid^{\frac{\delta}{2}}}{ \mid \left(\delta-p-1\right) D + S \mid^{\frac{\delta+n}{2}} }.
	\end{split}
\end{equation*}

The above optimization problem is easily solved because the marginal likelihood is concave in $\delta$. Moreover, remark that $\mid \left(\delta-p-1\right) D \mid = \left(\delta-p-1\right)^p \prod_{r=1}^{R}{d_r}$ and $\mid \left(\delta-p-1\right) D + S \mid = \left(\prod_{r=1}^{R}{d_r}\right) \left( \prod_{l=1}^{L}{\left(\delta-p-1 + e_l\right)}\right)$, where $d_r$ and $e_l$ are respectively the r$^{\text{th}}$  and l$^{\text{th}}$ largest eigenvalues of $D$ and $D^{-1}S$. Hence, evaluating the objective function for different values of $\delta$ is computationally cheap provided the eigenvalues of $D$ and $D^{-1}S$ have been pre-computed. We are referring the reader to \citet[Section 2.3.]{hannart2014} for the proof that the asymptotic properties of estimator~\eqref{expectSigma} and~\eqref{expectOmega} (Proposition~\ref{prop1}) hold when $\delta=\widehat{\delta}$.

\section{Bayes factors}
\label{BF}

\subsection{Bayes factor for conditional independence}
\label{BF:cond}
In this section we derive an analytic expression for the Bayes factor evaluating the null hypothesis of conditional independence between two variables in context of model \eqref{GC}. For ease of notation we define $F=(\delta-p-1)D$ and $T=F+S$. We wish to evaluate the null hypothesis of conditional independence, denoted $\text{H}_{0,ij}^{\text{C}}$, between two coordinates $i$ and $j$, $1\leq i<j\leq p$. We test $\text{H}_{0,ij}^{\text{C}}: \omega_{ij} = 0$ against the alternative hypothesis $\text{H}_{1,ij}^{\text{C}} : \omega_{ij} \neq 0$, where $\omega_{ij}$ is the $(i,j)^{\text{th}}$ element of $\Omega$. The Bayes factor evaluating evidence in favour of $\text{H}_{1,ij}^{\text{C}}$~is
\begin{equation}
\label{BFC1}
\text{BF}_{ij}^{\text{C}} = \frac{\int p_1(Y \mid \Sigma) p_1(\Sigma) d\Sigma}{\int p_0(Y \mid \Sigma^0) p_0(\Sigma^0) d\Sigma^0},
\end{equation}
where, by definition, $\Sigma_0$ is such that $\omega_{ij}=0$. 

\citet{giudici1995} showed that \eqref{BFC1} could be obtained in closed-form by reparameterising the Gaussian conjugate model and defining a compatible prior under the null hypothesis using the approach of \citet{dickey1971}. However, the proposed Bayes factor does not exist in high dimensional settings because it depends on elements of $S^{-1}$. This problem is here circumvented by factorising the joint likelihood of the observed data as
$$ p(Y\vert \Sigma) = p(Y_b \vert \Sigma_{bb}) p(Y_a \vert Y_b, B_{a\vert b}, \Sigma_{aa.b}) ,$$
the product of a marginal and conditional likelihood. This factorisation arises from the partition of $Y=\left[Y_a, Y_b\right]$ into two disjoint subsets indexed by $a = \{ i, j\}$ and $b=V \setminus a$. The quantity $B_{a\vert b} = \Sigma_{bb}^{-1} \Sigma_{ba}$ represents the matrix of regression coefficients obtained when regressing the variables indexed by $a$ onto the variables indexed by $b$, whereas $\Sigma_{aa.b} = \Sigma_{aa} - \Sigma_{ab} \Sigma_{bb}^{-1} \Sigma_{ba}$ denotes the residual covariance matrix. 

The factorisation of the likelihood allows conveniently to simplify \eqref{BFC1}. Using the change of variable from $(\Sigma_{aa}, \Sigma_{ab}, \Sigma_{bb})$ to $(\Sigma_{aa.b}, B_{a\mid b}, \Sigma_{bb})$ together with the fact that $\Sigma_{bb}$ is independent of $(B_{a\mid b}, \Sigma_{aa.b})$,  most nuisance parameters are integrated out and equation \eqref{BFC1} becomes
\begin{equation}
	\label{BFC2}
	\text{BF}_{ij}^{\text{C}} = \frac{\iint p_1(Y_a \mid Y_b, B_{a\mid b}, \Sigma_{aa.b}) p_1(B_{a\mid b}, \Sigma_{aa.b}) dB_{a\mid b} d\Sigma_{aa.b}}{\iint p_0(Y_a \mid Y_b, B_{a\mid b}, \Sigma_{aa.b}^0) p_0(B_{a\mid b}, \Sigma_{aa.b}^0) dB_{a\mid b} d\Sigma_{aa.b}^0}.
\end{equation}

Note that by the standard properties of the multivariate normal and Inverse-Wishart distributions \citet[Theorems 2.3.12. and 3.3.9.]{gupta2000} the densities under the alternative model are
\begin{equation}
	\label{model1alt}
	\begin{split}
		\text{vec}(Y_a) \mid Y_b, B_{a\mid b}, \Sigma_{aa.b} &\sim N_{n\times 2}\left(\text{vec}(Y_b B_{a\mid b}), \Sigma_{aa.b} \otimes I_n\right),\\
		\text{vec}(B_{a\mid b})\mid \Sigma_{aa.b} &\sim N_{(p-2)\times  2}\left(\text{vec}(F_{a\mid b}), \Sigma_{aa.b} \otimes F_{bb}^{-1}\right),\\
		\Sigma_{aa.b} &\sim IW_{2} \left(F_{aa.b}, \delta\right),
	\end{split}
\end{equation}
where $F_{a\mid b} = F_{bb}^{-1} F_{ba}$ and $F_{aa.b} = F_{aa}-F_{ab} F_{bb}^{-1} F_{ba}$. Therefore, the simplification of Bayes factor \eqref{BFC1} intuitively tells us that evaluating the conditional independence between any two coordinates within the $p$-dimensional Gaussian conjugate model \eqref{GC} is equivalent to evaluating the diagonality of the residual covariance matrix in a bivariate response regression model.

To obtain \eqref{BFC2} in closed-form we, similarly to \citet{giudici1995}, define a compatible prior for $(B_{a\mid b}, \Sigma_{aa.b})$ under the null hypothesis $\text{H}_{0,ij}^{\text{C}}$ using the conditional approach of \citet{dickey1971}. Precisely, the prior density under $\text{H}_{0,ij}^{\text{C}}$ is derived from that under $\text{H}_{1,ij}^{\text{C}}$ by conditioning on $\text{H}_{0,ij}^{\text{C}}$. The densities under the null model are therefore
\begin{equation}
	\label{model1null}
	\begin{split}
		\text{vec}(Y_a) \mid Y_b, B_{a\mid b}, \Sigma_{aa.b}^0 &\sim N_{n\times 2}\left(\text{vec}(Y_b B_{a\mid b}), \Sigma_{aa.b}^0 \otimes I_n\right),\\
		\text{vec}(B_{a\mid b})\mid \Sigma_{aa.b}^0 &\sim N_{(p-2)\times  2}\left(\text{vec}(F_{a\mid b}), \Sigma_{aa.b}^0 \otimes F_{bb}^{-1}\right),\\
		p_0(B_{a\mid b}, \Sigma_{aa.b}^0) &= p_1(B_{a\mid b}, \Sigma_{aa.b} \mid \text{H}_{0,ij}^{\text{C}}) \\
		& = \frac{p_1(B_{a\mid b} , \Sigma_{aa.b}, \text{H}_{0,ij}^{\text{C}})}{\iint p_1(B_{a\mid b}, \Sigma_{aa.b}, \text{H}_{0,ij}^{\text{C}}) dB_{a\mid b} d\Sigma_{aa.b}} ,
	\end{split}
\end{equation}
where $\Sigma_{aa.b}^0$ is such that $\omega_{ij}=0$.

We now state the main result of this section.
\medskip
\begin{lemma}
\label{lemma1}
	Assume \eqref{BFC2} holds with densities defined by \eqref{model1alt} and \eqref{model1null}. Then the Bayes factor in favour of $\text{H}_{1,ij}^{\text{C}}$ is
\begin{equation*}
		\text{BF}_{ij}^{\text{C}} = \frac{\Gamma\left( \frac{\delta+n}{2} \right) \Gamma\left( \frac{\delta+n-1}{2} \right) \Gamma^2\left(\frac{\delta+1}{2}\right)}{\Gamma\left(\frac{\delta}{2}\right) \Gamma\left(\frac{\delta-1}{2}\right) \Gamma^2\left( \frac{\delta+n+1}{2} \right)}  \frac{(1-r_{g_{ij}}^2)^{\frac{\delta}{2}}}{(1-r_{q_{ij}}^2)^{\frac{\delta+n}{2}}} \left(\frac{g_{ii} g_{jj}}{q_{ii} q_{jj}}\right)^{\frac{1}{2}},
\end{equation*}
with $F_{aa.b}=\left[ \begin{array}{cc} g_{ii} & g_{ij} \\ g_{ij} & g_{jj} \end{array} \right]$, $r_{g_{ij}}=g_{ij}(g_{ii}g_{jj})^{-1/2}$, $T_{aa.b}=\left[ \begin{array}{cc} q_{ii} & q_{ij} \\ q_{ij} & q_{jj} \end{array} \right]$ and $r_{q_{ij}}=q_{ij}(q_{ii}q_{jj})^{-1/2}$.
\end{lemma}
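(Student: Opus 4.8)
The plan is to evaluate the numerator and denominator integrals of \eqref{BFC2} separately as marginal likelihoods of conjugate bivariate regressions, and then to simplify their ratio. Throughout I write $B=B_{a\mid b}$ and $\Sigma=\Sigma_{aa.b}$, and I set $M=I_n+Y_b F_{bb}^{-1}Y_b^T$.

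For the numerator (alternative model) I would integrate out $B$ and $\Sigma$ in turn. Integrating the Gaussian likelihood of \eqref{model1alt} against the Gaussian prior on $B$ is the standard matrix-normal conjugate step and yields $\text{vec}(Y_a)\mid\Sigma\sim N\!\left(\text{vec}(Y_b F_{a\mid b}),\,\Sigma\otimes M\right)$, so that the exponent becomes $-\tfrac12\text{tr}(\Sigma^{-1}G)$ with $G=(Y_a-Y_b F_{a\mid b})^T M^{-1}(Y_a-Y_b F_{a\mid b})$ and a factor $|M|^{-1}$ appears from the determinant of $\Sigma\otimes M$. Integrating the remaining expression against the $IW_2(F_{aa.b},\delta)$ prior is then simply the normalising constant of an $IW_2(G+F_{aa.b},\delta+n)$ density, producing the matrix gamma ratio $\Gamma_2(\tfrac{\delta+n}{2})/\Gamma_2(\tfrac{\delta}{2})$ together with $|F_{aa.b}|^{\delta/2}/|G+F_{aa.b}|^{(\delta+n)/2}$.

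The crucial step is to identify $G+F_{aa.b}$ and $M$ with blocks of $T=F+S$. Using the Woodbury identity I would rewrite $M^{-1}=I_n-Y_b T_{bb}^{-1}Y_b^T$, and the matrix-determinant lemma gives $|M|=|T_{bb}|/|F_{bb}|$. Expanding $G$ and collecting terms, and invoking the parallel-sum identity $F_{bb}(F_{bb}+S_{bb})^{-1}S_{bb}=S_{bb}(F_{bb}+S_{bb})^{-1}F_{bb}$, I would establish the key identity $G+F_{aa.b}=T_{aa.b}$, i.e. the posterior scale matrix is exactly the $a\mid b$ Schur complement of $T$. I expect this identity to be the main obstacle, since it requires careful block bookkeeping and the cancellation of all cross terms involving $Y_b$; once it holds, the numerator is fully explicit in $|F_{aa.b}|=g_{ii}g_{jj}(1-r_{g_{ij}}^2)$ and $|T_{aa.b}|=q_{ii}q_{jj}(1-r_{q_{ij}}^2)$, and the diagonal entries satisfy $q_{ii}=g_{ii}+[G]_{11}$, $q_{jj}=g_{jj}+[G]_{22}$.

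For the denominator (null model) the key observation is that $\omega_{ij}=0$ is equivalent to $\Sigma_{aa.b}$ being diagonal, because $(\Sigma^{-1})_{aa}=\Sigma_{aa.b}^{-1}$ is the $\{i,j\}$ block of the full precision. Under the Dickey conditional prior \eqref{model1null}, restricting the $IW_2(F_{aa.b},\delta)$ density to diagonal matrices factorises it into two independent inverse-gamma priors on the diagonal entries, so the bivariate regression decouples into two independent univariate conjugate regressions. Each contributes a univariate inverse-gamma normalising constant in $q_{ii}$ and $q_{jj}$, while the conditioning normaliser (the integral in the denominator of \eqref{model1null}) is itself a product of two inverse-gamma constants in $g_{ii}$ and $g_{jj}$; tracking the shape and rate parameters of these integrals is the delicate point, as it fixes both the $\Gamma^2(\tfrac{\delta+1}{2})/\Gamma^2(\tfrac{\delta+n+1}{2})$ factor and the power of the scalar factor in $g_{ii}g_{jj}$ and $q_{ii}q_{jj}$. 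Finally I would form the ratio: the common factors $(2\pi)^{-n}$, the powers of $2$, $|M|^{-1}=|F_{bb}|/|T_{bb}|$, and $|F_{aa.b}|^{\delta/2}/\Gamma_2(\tfrac{\delta}{2})$ cancel, I would convert the matrix gamma functions via $\Gamma_2(x)=\pi^{1/2}\Gamma(x)\Gamma(x-\tfrac12)$, and substituting the two determinant expansions assembles the ratio of gamma functions, the factors $(1-r_{g_{ij}}^2)^{\delta/2}$ and $(1-r_{q_{ij}}^2)^{-(\delta+n)/2}$, and the remaining scalar factor in $g_{ii}g_{jj}$ and $q_{ii}q_{jj}$ into the stated closed form.
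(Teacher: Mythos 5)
Your proposal is correct and, at the level of architecture, matches the paper's proof: the numerator is the conjugate matrix-normal--inverse-Wishart marginal likelihood, and the denominator decouples under Dickey's conditional prior into two independent univariate conjugate regressions whose inverse-gamma constants (shapes $\frac{\delta+1}{2}$ updated to $\frac{\delta+n+1}{2}$, rates $g_{ii}/2,\,g_{jj}/2$ updated to $q_{ii}/2,\,q_{jj}/2$) produce exactly the paper's display \eqref{denum2}. Where you genuinely depart is the numerator: you integrate out $B_{a\mid b}$ first, obtaining the marginal covariance $\Sigma_{aa.b}\otimes M$ with $M=I_n+Y_bF_{bb}^{-1}Y_b^T$, and then establish $G+F_{aa.b}=T_{aa.b}$ and $\vert M\vert=\vert T_{bb}\vert/\vert F_{bb}\vert$ via Woodbury and the matrix-determinant lemma. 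The paper's proof of Lemma~\ref{lemma1} bypasses this: it reaches \eqref{num2} through the standard completing-the-square update, in which the posterior scale $F_{aa.b}+Y_a^TY_a+F_{ab}F_{bb}^{-1}F_{ba}-\bar{B}_{a\mid b}^T T_{bb}\bar{B}_{a\mid b}$ collapses to the Schur complement $T_{aa.b}$ with no Woodbury step. Your ``main obstacle'' identity is in fact precisely the paper's Propositions~\ref{prop5} and~\ref{prop6}, proved separately in the appendix and used only later (Section~\ref{multest:mult}) for the null distributions; so your route costs extra block algebra inside the lemma but delivers as a by-product the representation $T_{aa.b}=F_{aa.b}+Z$, including your diagonal relations $q_{ii}=g_{ii}+[G]_{11}$ and $q_{jj}=g_{jj}+[G]_{22}$, that the paper needs anyway for the tail probabilities.

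Two small cautions on the final assembly. First, carried out carefully, your bookkeeping yields the scalar factor $\left(q_{ii}q_{jj}/(g_{ii}g_{jj})\right)^{1/2}$, since the surviving exponents are $(g_{ii}g_{jj})^{\frac{\delta}{2}-\frac{\delta+1}{2}}(q_{ii}q_{jj})^{\frac{\delta+n+1}{2}-\frac{\delta+n}{2}}$; this is the reciprocal of the factor printed in Lemma~\ref{lemma1}, but it agrees with the ratio of the paper's own \eqref{num2} to \eqref{denum2} and with the posterior-over-prior factor $(t_{ii}t_{jj}/(f_{ii}f_{jj}))^{1/2}$ of Lemma~\ref{lemma2}, so the printed statement evidently inverts this non-scale-invariant term (which is in any case discarded in the scaled Bayes factor of Corollary~\ref{cor1}). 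Second, your claim that $\vert F_{aa.b}\vert^{\delta/2}/\Gamma_2\left(\frac{\delta}{2}\right)$ ``cancels'' is true only inside the Dickey normalisation of the null prior, where the bivariate inverse-Wishart constants drop out; it does not cancel between numerator and denominator of the Bayes factor, consistently with your final formula retaining $(1-r_{g_{ij}}^2)^{\delta/2}$ and $\Gamma\left(\frac{\delta}{2}\right)\Gamma\left(\frac{\delta-1}{2}\right)$.
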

\smallskip
\begin{remark}
\label{remark1}
	In Lemma~\ref{lemma1}, the quantities $g_{ii}$ and $q_{ii}$ (resp. $g_{jj}$ and $q_{jj}$) can be thought of representing prior and posterior partial variances for coordinate $i$ (resp.~$j$), whereas $r_{g_{ij}}$ and $r_{q_{ij}}$ can be thought of representing prior and posterior partial correlations.
\end{remark}
\smallskip
\begin{remark}
\label{remark3}
	The Bayes factor proposed by \citet[lemma 3]{giudici1995}, in contrast to Lemma~\ref{lemma1}, defines the quantities $g_{ij}$ and $q_{ij}$ such that the matrices $F_{aa.b} = \{ g_{ij} \}$ and
$F_{aa.b} + S_{aa.b} = \{ q_{ij} \}$, with $S_{aa.b} = S_{aa} - S_{ab}S_{bb}^{-1}S_{ba}$. Note that here $S_{aa.b}$ only exists when $S_{bb}$ is invertible (i.e. when $n$ is large relatively to $p$) whereas $T_{aa.b}=T_{aa}-T_{ab} T_{bb}^{-1} T_{ba}$ defined in Lemma~\ref{lemma1} exists even when $p>n$ because $T$ is always positive definite (a consequence of Proposition~\ref{prop1}).
\end{remark}
\smallskip
\begin{remark}
\label{remark2}
	Standard matrix algebra \citep[Theorem 1.2.3.v]{gupta2000} tells us that $F_{aa.b}=\{(F^{-1})_{aa}\}^{-1}$ and $T_{aa.b}=\{(T^{-1})_{aa}\}^{-1}$. This means that the elements of the $2\times 2$ matrices $F_{aa.b}$ and $T_{aa.b}$ can respectively be obtained from the elements of $F^{-1}$ and $T^{-1}$. The computation of the Bayes factor in Lemma~\ref{lemma1} for all pairs of variables $(i,j)$, $1\leq i < j \leq p$, hence boils down to computing $F^{-1}$ and $T^{-1}$.
\end{remark}

\subsection{Bayes factor for marginal independence}
\label{BF:marg}
We now derive an analytic expression for the Bayes factor evaluating the null hypothesis of marginal independence between any two variables in context of model \eqref{GC}. We test $\text{H}_{0,ij}^{\text{M}}: \sigma_{ij} = 0$ against the alternative hypothesis $\text{H}_{1,ij}^{\text{M}} : \sigma_{ij} \neq 0$, where $\sigma_{ij}$ is the $(i,j)^{\text{th}}$ element of $\Sigma$. The Bayes factor evaluating evidence in favour of  $\text{H}_{1,ij}^{\text{M}}$ is
\begin{equation}
\label{BFM1}
\text{BF}_{ij}^{\text{M}} = \frac{\int p_1(Y \mid \Sigma) p_1(\Sigma) d\Sigma}{\int p_0(Y \mid \Sigma^0) p_0(\Sigma^0) d\Sigma^0},
\end{equation}
where now $\Sigma^0$ is such that $\sigma_{ij} = 0$.

We adopt a similar approach as in section~\ref{BF:cond} to obtain \eqref{BFM1} in closed-form. We first write the joint likelihood as
$$ p(Y\mid \Sigma) = p(Y_a \mid \Sigma_{aa}) p(Y_b \mid Y_a, B_{b\mid a}, \Sigma_{bb.a}) ,$$
and make a change of variable from $(\Sigma_{aa}, \Sigma_{ab}, \Sigma_{bb})$ to $(\Sigma_{aa}, B_{b\mid a}, \Sigma_{bb.a})$, where $B_{b\mid a} = \Sigma_{aa}^{-1} \Sigma_{ab}$ and $\Sigma_{bb.a} = \Sigma_{bb}-\Sigma_{ba} \Sigma_{aa}^{-1}\Sigma_{ab}$. Then, using the fact that $(B_{b\mid a}, \Sigma_{bb.a})$ is independent of $\Sigma_{aa}$ it is easily seen that the Bayes factor \eqref{BFM1} simplifies to
\begin{equation}
	\label{BFM2}
	\begin{split}
\text{BF}_{ij}^{\text{M}} &= \frac{\int p_1(Y_a \mid \Sigma_{aa}) p_1(\Sigma_{aa}) d\Sigma_{aa}}{\int p_0(Y_a \mid \Sigma_{aa}^0) p_0(\Sigma_{aa}^0) d\Sigma_{aa}^0}.
	\end{split}
\end{equation}
Here, the densities under the alternative model, by properties of the multivariate normal and Inverse-Wishart distributions, are
\begin{equation}
	\label{model2alt}
	\begin{split}
		\text{vec}(Y_a) \mid \Sigma_{aa} &\sim N_{n\times 2} (0, \Sigma_{aa} \otimes I_n),\\
		\Sigma_{aa} &\sim IW_{2} (F_{aa}, \delta-p+2),
	\end{split}
\end{equation}
whereas the densities under the null model are
\begin{equation}
	\label{model2null}
	\begin{split}
		\text{vec}(Y_a) \mid \Sigma_{aa}^0  & \sim N_{n\times 2} (0, \Sigma_{aa}^0 \otimes I_n),\\
		p_0(\Sigma_{aa}^0) &= p_1(\Sigma_{aa} \mid \text{H}_{0,ij}^{\text{M}}) = \frac{p_1(\Sigma_{aa}, \text{H}_{0,ij}^{\text{M}})}{\int p_1(\Sigma_{aa}, \text{H}_{0,ij}^{\text{M}}) d\Sigma_{aa}}.
	\end{split}
\end{equation}

We now state the following lemma.

\begin{lemma}
\label{lemma2}
	Assume \eqref{BFM2} holds with densities defined by \eqref{model2alt} and \eqref{model2null}. Then the Bayes factor in favour of $\text{H}_{1,ij}^{\text{M}}$ is
	\begin{equation*}
		\text{BF}_{ij}^{\text{M}} = \frac{\Gamma_2\left( \frac{\delta+n-p+2}{2} \right) \Gamma^2\left(\frac{\delta-p+3}{2}\right)}{\Gamma_2\left( \frac{\delta-p+2}{2} \right) \Gamma^2\left( \frac{\delta+n-p+3}{2} \right)} \ \frac{\left(1-r_{f_{ij}}^{2}\right)^{\frac{\delta-p+2}{2}}}{\left(1-r_{t_{ij}}^{2}\right)^{\frac{\delta+n-p+2}{2}}} \ \left( \frac{t_{ii} t_{jj}}{f_{ii} f_{jj}} \right)^{\frac{1}{2}} ,
\end{equation*}
with $F_{aa}=\left[ \begin{array}{cc} f_{ii} & f_{ij} \\ f_{ij} & f_{jj} \end{array} \right]$, $r_{f_{ij}}=f_{ij}(f_{ii}f_{jj})^{-1/2}$, $T_{aa}=\left[ \begin{array}{cc} t_{ii} & t_{ij} \\ t_{ij} & t_{jj} \end{array} \right]$ and $r_{t_{ij}}^{}=t_{ij}(t_{ii}t_{jj})^{-1/2}$.
\end{lemma}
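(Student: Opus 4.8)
The plan is to evaluate the numerator and denominator of \eqref{BFM2} separately in closed form and then take their ratio. Both integrals are marginal (integrated) likelihoods of conjugate Normal--Inverse-Wishart models, so each admits an explicit expression analogous to the marginal likelihood $p(Y;\delta)$ displayed in Section~\ref{bg:hyper}. This argument mirrors that for Lemma~\ref{lemma1} but is simpler, since the marginal factorisation involves no regression coefficients to integrate out. Throughout I write $\nu=\delta-p+2$ for the degrees of freedom appearing in \eqref{model2alt}.

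First I would evaluate the numerator. Since \eqref{model2alt} is exactly a bivariate Gaussian conjugate model with scale matrix $F_{aa}$ and $\nu$ degrees of freedom, integrating out $\Sigma_{aa}$ gives
\[
\int p_1(Y_a\mid\Sigma_{aa})\,p_1(\Sigma_{aa})\,d\Sigma_{aa}
= \pi^{-n}\,\frac{\Gamma_2\!\left(\frac{\nu+n}{2}\right)}{\Gamma_2\!\left(\frac{\nu}{2}\right)}\,\frac{|F_{aa}|^{\nu/2}}{|F_{aa}+S_{aa}|^{(\nu+n)/2}},
\]
where $S_{aa}=Y_a^{T}Y_a$ and $F_{aa}+S_{aa}=T_{aa}$ by the definition of $T$. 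This step is routine once the dimension, scale and degrees of freedom are matched to the general formula.

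The substantive step is the denominator, where the null prior $p_0(\Sigma_{aa}^0)$ must be made explicit through the conditional construction \eqref{model2null}. Writing the $IW_2(F_{aa},\nu)$ density and setting $\sigma_{ij}=0$, the off-diagonal term drops out of both $|\Sigma_{aa}|=\sigma_{ii}\sigma_{jj}$ and $\mathrm{tr}(F_{aa}\Sigma_{aa}^{-1})=f_{ii}/\sigma_{ii}+f_{jj}/\sigma_{jj}$, so the restricted density factorises into a product of two independent inverse-gamma kernels in $\sigma_{ii}$ and $\sigma_{jj}$, each of shape $(\nu+1)/2$ and scale $f_{ii}/2$ (resp.\ $f_{jj}/2$). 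I would obtain the normalising constant of $p_0$ from two one-dimensional inverse-gamma integrals. Because $\Sigma_{aa}^0$ is diagonal, the likelihood in \eqref{model2null} also factorises across the two columns of $Y_a$, so the denominator reduces to a product of two univariate Normal--inverse-gamma marginal likelihoods, using $f_{ii}+s_{ii}=t_{ii}$ (resp.\ for $j$). Combining with the normalising constant yields
\[
\int p_0(Y_a\mid\Sigma_{aa}^0)\,p_0(\Sigma_{aa}^0)\,d\Sigma_{aa}^0
= \pi^{-n}\,\frac{\Gamma^2\!\left(\frac{\nu+n+1}{2}\right)}{\Gamma^2\!\left(\frac{\nu+1}{2}\right)}\,\frac{(f_{ii}f_{jj})^{(\nu+1)/2}}{(t_{ii}t_{jj})^{(\nu+n+1)/2}}.
\]

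Finally I would take the ratio and simplify. The gamma factors combine directly into $\Gamma_2(\tfrac{\nu+n}{2})\Gamma^2(\tfrac{\nu+1}{2})/\{\Gamma_2(\tfrac{\nu}{2})\Gamma^2(\tfrac{\nu+n+1}{2})\}$, and writing $|F_{aa}|=f_{ii}f_{jj}(1-r_{f_{ij}}^2)$ and $|T_{aa}|=t_{ii}t_{jj}(1-r_{t_{ij}}^2)$ separates the determinants into the partial-correlation factor $(1-r_{f_{ij}}^2)^{\nu/2}/(1-r_{t_{ij}}^2)^{(\nu+n)/2}$ times a residual $(t_{ii}t_{jj}/f_{ii}f_{jj})^{1/2}$; the $\pi^{-n}$ factors cancel. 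Substituting $\nu=\delta-p+2$ reproduces the stated expression. The only delicate points are the degrees-of-freedom shift induced by the conditioning (which raises the effective univariate shape from $\nu/2$ to $(\nu+1)/2$) and the bookkeeping of the inverse-gamma normalising constants; everything else is a direct application of conjugacy.
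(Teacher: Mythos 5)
Your proposal is correct and follows essentially the same route as the paper's proof: the numerator is the standard conjugate Normal--Inverse-Wishart marginal likelihood with scale $F_{aa}$ and $\nu=\delta-p+2$ degrees of freedom, and the denominator exploits the fact that conditioning the $IW_2$ prior on $\sigma_{ij}=0$ factorises it into two inverse-gamma priors of shape $(\nu+1)/2$, reducing the integral to two univariate Normal--inverse-gamma marginal likelihoods, exactly as in the paper; your intermediate expressions match \eqref{num10} and \eqref{denum10} after substituting $\nu=\delta-p+2$. The only blemish is terminological: in the final simplification the factor $(1-r_{f_{ij}}^2)^{\nu/2}/(1-r_{t_{ij}}^2)^{(\nu+n)/2}$ involves \emph{marginal} rather than partial correlations (cf.\ Remark~\ref{remark4}).
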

\smallskip
\begin{remark}
\label{remark4}
	In Lemma~\ref{lemma2}, the quantities $f_{ii}$ and $t_{ii}$ (resp. $f_{jj}$ and $t_{jj}$) can be thought of representing prior and posterior marginal variances for coordinate $i$ (resp.~$j$), whereas $r_{f_{ij}}$ and $r_{t_{ij}}$ can be thought of representing prior and posterior marginal correlations.
\end{remark}
\smallskip
\begin{remark}
\label{remark5}
	The computation of the Bayes factor in Lemma~\ref{lemma2} for all pairs of variables $(i,j)$, $1\leq i < j \leq p$, boils down to computing $T$.
\end{remark}	

\subsection{Consistency}
\label{BF:cons}

In this section we consider the selection consistency of the Bayes factors defined in Lemma~\ref{lemma1} and~\ref{lemma2}. A Bayes factor is said to be consistent when $\lim_{n\rightarrow\infty} \text{BF}_{ij}=0$ if $\text{H}_{0,ij}$ is true and $\lim_{n\rightarrow\infty} \text{BF}_{ij}=\infty$ if $\text{H}_{1,ij}$ is true \citep{fernandez2001,casella2009,wangMaruyama2016}. In other words, the consistency property means that the true hypothesis will be selected when enough data are provided.

To prove the consistency of the Bayes factors, we make the following assumption.

\begin{assumption}
\label{ass1}
	The sample correlation matrix has a limit as $n\rightarrow\infty$ that is positive definite.
\end{assumption}

Assumption~\ref{ass1} also appears in \citet{maruyama2011}.
We now state the following result.

\begin{lemma}
\label{lemma3}
	Under Assumption 1 the Bayes factors $\text{BF}_{ij}^{\text{C}}$ and  $\text{BF}_{ij}^{\text{M}}$ are consistent in selection.
\end{lemma}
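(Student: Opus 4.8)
The plan is to split each Bayes factor into three multiplicative pieces — the ratio of gamma functions, the ratio of the two $(1-r^2)$ powers, and the ratio of the (partial) variances — and to determine the asymptotic order in $n$ of each piece separately, treating $p$, $D$ and $\delta$ as fixed. Throughout I would exploit the standardization $Y_j^T Y_j = n$, which makes $S = nR$ with $R$ the sample correlation matrix, so that $T = F + nR$ and $T/n \to R_\infty$, the positive-definite limit granted by Assumption~\ref{ass1}.

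For the polynomial pieces, first I would handle the gamma ratios with the elementary asymptotic $\Gamma(m+a)/\Gamma(m+b)\sim m^{a-b}$ as $m\to\infty$ (a consequence of Stirling's formula), writing $\Gamma_2(x)=\pi^{1/2}\Gamma(x)\Gamma(x-\tfrac12)$ in Lemma~\ref{lemma2}. With $m=n/2$, both gamma ratios turn out to be of order $n^{-3/2}$. For the variance pieces, $T/n\to R_\infty$ gives $t_{ii},t_{jj},q_{ii},q_{jj}\sim Cn$ while $f_{ii},f_{jj},g_{ii},g_{jj}$ stay fixed; hence the marginal ratio $(t_{ii}t_{jj}/f_{ii}f_{jj})^{1/2}$ is of order $n$ and the conditional ratio $(g_{ii}g_{jj}/q_{ii}q_{jj})^{1/2}$ of order $n^{-1}$. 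Multiplying, the non-power part of $\text{BF}_{ij}^{\text{M}}$ is of order $n^{-1/2}$ and that of $\text{BF}_{ij}^{\text{C}}$ of order $n^{-5/2}$ — in both cases a polynomial decay to $0$.

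The decisive factor is the power ratio. Because correlations are scale invariant, $r_{t_{ij}}$ equals the marginal correlation of $T/n$ and $r_{q_{ij}}$ equals the partial correlation read off the Schur complement of $T/n$; both therefore converge, via $T/n\to R_\infty$, to the corresponding true marginal and partial correlations, while $r_{f_{ij}}$ and $r_{g_{ij}}$ are fixed in $(-1,1)$. Under the alternative the limiting (partial) correlation is nonzero, so $1-r_{t_{ij}}^2$ (resp.\ $1-r_{q_{ij}}^2$) tends to a constant in $(0,1)$ and the denominator $(1-r^2)^{(\delta+n-\cdots)/2}$ decays geometrically; this exponential blow-up of the reciprocal swamps the polynomial decay of the other two pieces, giving $\text{BF}_{ij}\to\infty$. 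This half uses only convergence of the correlations, so Assumption~\ref{ass1} suffices.

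The main obstacle is the null case, where the limiting (partial) correlation is $0$ and the reciprocal $(1-r^2)^{-n/2}$ still appears. Mere convergence $r\to0$ is not enough: if it were too slow the reciprocal could overwhelm the $n^{-1/2}$ (resp.\ $n^{-5/2}$) decay. The fix is a rate: under $\text{H}_{0,ij}$ the sample (partial) correlation is $\sqrt n$-consistent, so $r=O_p(n^{-1/2})$, whence $n\,r^2=O_p(1)$ and $(1-r^2)^{n/2}=O_p(1)$ is bounded away from $0$ and $\infty$. The polynomial decay then dominates and $\text{BF}_{ij}\to0$. Establishing (or invoking) this $\sqrt n$-consistency under the null — the point at which the data-generating hypothesis, and not merely the positive-definite limit of Assumption~\ref{ass1}, enters — is the crux of the argument.
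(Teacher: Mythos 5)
Your proposal is correct and follows the same overall architecture as the paper's proof: Stirling-type asymptotics for the gamma ratios, convergence of the posterior (partial) correlations to their population counterparts, exponential blow-up of the $(1-r^2)$ power dominating under the alternative, and polynomial decay dominating under the null. Where you differ is in rigour, and in both places your version is the tighter one. First, the paper's displayed asymptotic equivalences silently drop the scale terms $\left(t_{ii}t_{jj}/f_{ii}f_{jj}\right)^{1/2}$ and $\left(g_{ii}g_{jj}/q_{ii}q_{jj}\right)^{1/2}$, which under the standardization $s_{ii}=s_{jj}=n$ grow like $n$ and decay like $n^{-1}$ respectively; your bookkeeping yields the correct polynomial orders $n^{-1/2}$ for $\text{BF}_{ij}^{\text{M}}$ and $n^{-5/2}$ for $\text{BF}_{ij}^{\text{C}}$, rather than the paper's $n^{-3/2}$ in both cases --- harmless for the conclusion, since any polynomial-to-zero factor suffices, but yours is the accurate accounting. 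Second, and more substantively, the paper's null-case step asserts that $r^2\rightarrow 0$ implies $(1-r^2)^{-(\delta+n-p+2)/2}\rightarrow 1$, which is a non sequitur without a rate: if $r^2$ decayed only like $n^{-1/2}$, that power would grow like $e^{\sqrt{n}/2}$ and overwhelm the polynomial term. Your appeal to $\sqrt{n}$-consistency under the null, $r=O_p(n^{-1/2})$, so that $(1-r^2)^{-n/2}=O_p(1)$, is precisely the missing ingredient (note the limit is then a nondegenerate $O_p(1)$ random variable, not $1$ as the paper claims), and you correctly identify this as the point where the data-generating hypothesis, beyond the positive-definite limit of Assumption~\ref{ass1}, must enter. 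The only caveat: with $O_p$ bounds you obtain consistency in probability; for an almost-sure statement one would strengthen the rate, e.g.\ via the law of the iterated logarithm, $r^2=O(n^{-1}\log\log n)$ a.s., under which $(1-r^2)^{-n/2}$ grows at most polylogarithmically and is still swamped by the polynomial decay.
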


\section{Graph structure recovery}
\label{multest}

\subsection{Inference by multiple testing}
\label{multest:inf}

We propose to infer the marginal and conditional independence graphs by multiple testing of hypotheses using the Bayes factors introduced in the previous section.
Precisely, we propose to infer the edge set $E_U = \{ (i,j) \mid \omega_{ij}\neq 0 \}$ of the undirected graph $U=(V, E_U)$ with vertex set $V$ by evaluating $\text{H}_{0,ij}^{\text{C}}$ versus $\text{H}_{1,ij}^{\text{C}}$ for $1\leq i < j \leq p$. Similarly, we propose to infer the edge set $E_B  = \{ (i,j) \mid \sigma_{ij}\neq 0 \}$ of the bidirected graph $B=(V, E_B)$ by evaluating $\text{H}_{0,ij}^{\text{M}}$ versus $\text{H}_{1,ij}^{\text{M}}$ for $1\leq i < j \leq p$. On the whole, the approach consists in translating the pattern of rejected hypotheses into a graph \citep{drton2007}.

\subsection{Scaled Bayes factors}
\label{multest:scaled}

To infer either graph structure it is necessary to compare Bayes factors between all $p(p-1)/2$ pairs of variables. However, the Bayes factors defined in Lemma~\ref{lemma1} and~\ref{lemma2} are not scale-invariant (due to their last terms) and, hence, comparable between different pairs of variables. In light of this, we define scaled versions of the Bayes factors defined in Lemma~\ref{lemma1} and~\ref{lemma2} that can more appropriately rank edges of graphs $U$ and $B$. Corollary~\ref{cor1} and~\ref{cor2} summarize.
\begin{corollary}
\label{cor1}
	The scaled Bayes factor in favour of $\text{H}_{1,ij}^{\text{C}}$ is
	\begin{equation*}
		\text{sBF}_{ij}^{\text{C}} = \frac{\Gamma\left( \frac{\delta+n}{2} \right) \Gamma\left( \frac{\delta+n-1}{2} \right) \Gamma^2\left(\frac{\delta+1}{2}\right)}{\Gamma\left(\frac{\delta}{2}\right) \Gamma\left(\frac{\delta-1}{2}\right) \Gamma^2\left( \frac{\delta+n+1}{2} \right)}  \frac{(1-r_{g_{ij}}^2)^{\frac{\delta}{2}}}{(1-r_{q_{ij}}^2)^{\frac{\delta+n}{2}}},
	\end{equation*}
	with quantities defined as in Lemma~\ref{lemma1}.
\end{corollary}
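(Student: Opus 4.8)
The plan is to read the scaled Bayes factor directly off Lemma~\ref{lemma1}, so the only real work is to make precise which factor the ``scaling'' discards and to argue that what remains is the quantity one should compare across pairs. First I would record the three-fold factorisation already visible in Lemma~\ref{lemma1}: $\text{BF}_{ij}^{\text{C}}$ is the product of (a) a ratio of gamma functions in $n$ and $\delta$ only, hence identical for every pair $(i,j)$; (b) a correlation factor $(1-r_{g_{ij}}^2)^{\delta/2}/(1-r_{q_{ij}}^2)^{(\delta+n)/2}$ built from the prior and posterior partial correlations; and (c) a variance factor $(g_{ii}g_{jj}/q_{ii}q_{jj})^{1/2}$ built from the prior and posterior partial variances. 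By definition the scaled Bayes factor is $\text{BF}_{ij}^{\text{C}}$ with factor (c) removed, so copying factors (a) and (b) out of Lemma~\ref{lemma1} reproduces the expression in Corollary~\ref{cor1} directly.

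Second I would justify that (c) is precisely the factor that should be dropped when ranking edges. The quantities $r_{g_{ij}}$ and $r_{q_{ij}}$ are the off-diagonal entries of $F_{aa.b}$ and $T_{aa.b}$ normalised by the geometric means of the corresponding diagonals, so they are exactly the prior and posterior partial correlations for the pair $(i,j)$, and these are the scale-free quantities on which the null hypothesis $\omega_{ij}=0$ actually bears. Factors (a) and (b) are therefore functions of correlations and of parameters common to all pairs, whereas (c) is the only place where the pair-specific partial variances $q_{ii},q_{jj}$ (and $g_{ii},g_{jj}$) enter. In the standardised setting used throughout the paper ($D=I_p$) this is especially transparent, since then $F_{aa.b}=(\delta-p-1)I_2$ gives $g_{ii}=g_{jj}=\delta-p-1$ and $r_{g_{ij}}=0$ for every pair, so (c) collapses to the pair-specific weight $(\delta-p-1)/(q_{ii}q_{jj})^{1/2}$ governed by the posterior partial variances alone. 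Ranking by the full $\text{BF}_{ij}^{\text{C}}$ would thus confound the evidence about $r_{q_{ij}}$ with this variance weight; deleting (c) removes the confounding and leaves a monotone summary of the partial-correlation evidence, which is the content of Corollary~\ref{cor1}.

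The step I expect to need the most care is this justification rather than the transcription in the first paragraph. One must be explicit that the operative notion is comparability \emph{across distinct pairs}: under a single simultaneous rescaling of all variables the ratio $g_{ii}g_{jj}/q_{ii}q_{jj}$ is itself invariant, because the coordinatewise scalings cancel between the prior and posterior partial variances, so it is not global scale-invariance but the genuine pair-to-pair variation of $q_{ii},q_{jj}$ that makes the unscaled Bayes factor unsuitable for ranking. Once this point is settled the remainder is routine, and the identical argument applied to Lemma~\ref{lemma2} gives the marginal-independence analogue in Corollary~\ref{cor2}.
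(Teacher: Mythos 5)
Your proposal is correct and takes essentially the same route as the paper: the paper offers no separate proof of Corollary~\ref{cor1}, obtaining it definitionally from Lemma~\ref{lemma1} by deleting the scale-dependent last factor $\left(g_{ii}g_{jj}/(q_{ii}q_{jj})\right)^{1/2}$, exactly as in your first paragraph, and your $D=I_p$ observation ($r_{g_{ij}}=0$, ranking by $r_{q_{ij}}^2$) is the content of Remark~\ref{remark6}. One small caveat that does not affect the result: your side claim that a simultaneous rescaling of all variables leaves $g_{ii}g_{jj}/(q_{ii}q_{jj})$ invariant holds only if the prior matrix $D$ is rescaled along with the data; with $D$ fixed (the paper's setting, with $Y$ standardized) the prior partial variances $g_{ii}$ are unchanged while the posterior ones $q_{ii}$ are not, so that cancellation argument should be stated with the prior transforming covariantly.
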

\smallskip
\begin{corollary}
\label{cor2}
	The scaled Bayes factor in favour of $\text{H}_{1,ij}^{\text{M}}$ is
	\begin{equation*}
		\text{sBF}_{ij}^{\text{M}} = \frac{\Gamma_2\left( \frac{\delta+n-p+2}{2} \right) \Gamma^2\left(\frac{\delta-p+3}{2}\right)}{\Gamma_2\left( \frac{\delta-p+2}{2} \right) \Gamma^2\left( \frac{\delta+n-p+3}{2} \right)} \ \frac{\left(1-r_{f_{ij}}^{2}\right)^{\frac{\delta-p+2}{2}}}{\left(1-r_{t_{ij}}^{2}\right)^{\frac{\delta+n-p+2}{2}}},
	\end{equation*}
	with quantities defined as in Lemma~\ref{lemma2}.
\end{corollary}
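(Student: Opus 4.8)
The plan is to obtain Corollary~\ref{cor2} directly from Lemma~\ref{lemma2}, since $\text{sBF}_{ij}^{\text{M}}$ is nothing but $\text{BF}_{ij}^{\text{M}}$ with its final factor removed. I would first write the Lemma~\ref{lemma2} expression as a product of three pieces: a ratio of (multivariate) gamma functions $A$ that depends only on $\delta$, $n$ and $p$; a correlation factor $R=(1-r_{f_{ij}}^{2})^{(\delta-p+2)/2}/(1-r_{t_{ij}}^{2})^{(\delta+n-p+2)/2}$; and a variance factor $L=(t_{ii}t_{jj}/f_{ii}f_{jj})^{1/2}$. The statement to establish is then that $\text{sBF}_{ij}^{\text{M}}=A\,R$ is the appropriate quantity for ranking edges across pairs, i.e.\ that $L$ is the sole carrier of the pair-specific scale information.

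Next I would classify the three factors according to their dependence on the individual scales of variables $i$ and $j$. The factor $A$ is a constant common to every pair. The factor $R$ depends on the data only through the prior and posterior marginal correlations $r_{f_{ij}}$ and $r_{t_{ij}}$, which are normalised by the corresponding diagonal entries and therefore encode the strength of association between $i$ and $j$ on a scale-free basis. By contrast, $L$ involves the diagonal entries $t_{ii},t_{jj},f_{ii},f_{jj}$ themselves and thus reflects the marginal variances of the two variables rather than their dependence; it is this factor that renders $\text{BF}_{ij}^{\text{M}}$ non-comparable across pairs measured on different scales.

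To make the removal rigorous I would specialise to the working convention of the paper, $D=I_p$ with the data standardised so that $Y_j^{T}Y_j=n$. Then $F_{aa}=(\delta-p-1)I_2$, giving $f_{ii}=f_{jj}=\delta-p-1$ and $r_{f_{ij}}=0$, while $t_{ii}=(\delta-p-1)+n$ for every $i$. Hence $L=\{(\delta-p-1)+n\}/(\delta-p-1)$ is a strictly positive constant independent of the pair $(i,j)$, so that $\text{BF}_{ij}^{\text{M}}=L\cdot\text{sBF}_{ij}^{\text{M}}$ with the same $L$ for all edges. Dividing out this common constant leaves the ordering of the edges unchanged, and the surviving statistic $\text{sBF}_{ij}^{\text{M}}$ depends on the data only through the sample correlation between variables $i$ and $j$, which is exactly the scale-free measure one wants to compare.

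The hard part is conceptual rather than algebraic: the factorisation itself is immediate, so the work lies in pinning down the precise sense in which $\text{sBF}_{ij}^{\text{M}}$ is \emph{scale-invariant} and \emph{comparable}, and in verifying that discarding $L$ does not distort the ranking. The delicate point is to argue that $R$ carries all of the association signal while $L$ collapses to a pair-independent constant under standardisation; both rely on the diagonal, isotropic structure induced by $D=I_p$ and on $s_{ii}=n$, which I would record as the operating assumptions under which the corollary serves its intended purpose.
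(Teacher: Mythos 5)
Your proposal is correct and takes essentially the same route as the paper: Corollary~\ref{cor2} carries no separate proof there --- it is obtained from Lemma~\ref{lemma2} exactly by your factorisation $\text{BF}_{ij}^{\text{M}}=A\,R\,L$ with the non-scale-invariant last factor $L=\left(t_{ii}t_{jj}/f_{ii}f_{jj}\right)^{1/2}$ discarded, as motivated in Section~\ref{multest:scaled}. Your added verification that under $D=I_p$ and the paper's standardisation ($s_{ii}=n$) one gets $r_{f_{ij}}=0$ and $L=(\delta+n-p-1)/(\delta-p-1)$, a strictly positive constant common to all pairs so that the ranking is untouched, is sound and is in substance what the paper records in Remark~\ref{remark6}, where the ordering of $\text{sBF}_{ij}^{\text{M}}$ is identified with that of the squared posterior marginal correlations $r_{t_{ij}}^{2}$.
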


\begin{remark}
\label{remark6}
	When the prior matrix $D$ is proportional to $I_p$, then $r_{f_{ij}}=0$ and $r_{g_{ij}}=0$, and the orderings provided by the scaled Bayes factors in Corollaries~\ref{cor1} and~\ref{cor2} for all pairs $(i,j)$ are identical to the orderings provided by the squares of the posterior marginal and partial correlations $r_{t_{ij}}$ and $r_{q_{ij}}$, respectively.
\end{remark}

\subsection{Multiplicity adjustment and error control}
\label{multest:mult}

To address the multiplicity problem we propose to use the tail or error probability associated with the null distribution of each scaled Bayes factor. The tail probability is closely related to the notion of a P-value: the Bayes factor is treated as a random variable and its distribution, which follows that of the random data, is used to make a probability statement about its observed value. Then, to recover the structure of a graph, the tail probabilities obtained from all  $p(p-1)/2$ comparisons are adjusted using standard multiplicity correction procedures to control, say, the family-wise error or false discovery rates \citep{goeman2014}.

In the following, we study the conditional null distribution of the Bayes factors statistics defined in Corollaries~\ref{cor1} and~\ref{cor2}. The conditional null distribution here refers to the distribution that would be obtained by shuffling or permuting labels of the observations \citep{jiang2017}. Under this scheme, we shall define $Pr\left(\text{sBF}_{ij}^{\text{M}} > b_1 \right)$ and $Pr\left(\text{sBF}_{ij}^{\text{C}} > b_2 \right)$ the probabilities of observing values for the two scaled Bayes factors that are respectively larger than $b_1$ and $b_2$. Next, we show that these tail probabilities can be obtained analytically without the need of a permutation algorithm.

Before, we state three results which will be used in our argumentation.

\smallskip
\begin{proposition}
\label{prop4}
	Suppose $\Phi\sim W_2(\Sigma, d)$, where
$$\Phi =
\begin{pmatrix}
	\phi_1^2 & \phi_1 \phi_2 \varphi \\
	\phi_1 \phi_2 \varphi & \phi_2^2 
\end{pmatrix}
\quad\text{and}\quad 
\Sigma =
\begin{pmatrix}
	\sigma_1^2 & \sigma_1 \sigma_2 \rho \\
	\sigma_1 \sigma_2 \rho & \sigma_2^2 
\end{pmatrix}
$$
are parametrised in terms of their correlations $-1\leq \varphi \leq 1$ and $-1\leq \rho \leq 1$. Then, $$ (\varphi^2 \mid \rho = 0) \sim Beta(1/2, (d-1)/2).$$
\end{proposition}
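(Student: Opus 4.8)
The plan is to exploit the scale invariance of $\varphi$ together with the Gram-matrix representation of the Wishart law. Since we condition on $\rho=0$, the scale matrix is diagonal, $\Sigma=\mathrm{diag}(\sigma_1^2,\sigma_2^2)$. Writing $D=\mathrm{diag}(\sigma_1^2,\sigma_2^2)$ and replacing $\Phi$ by $D^{-1/2}\Phi D^{-1/2}$ leaves the correlation $\varphi=\Phi_{12}/\sqrt{\Phi_{11}\Phi_{22}}$ unchanged and, by the standard scaling property of the Wishart distribution, turns the law into $W_2(I_2,d)$. Hence there is no loss of generality in taking $\Sigma=I_2$, and the statement reduces to identifying the law of the sample correlation built from a $W_2(I_2,d)$ matrix.

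Next I would use the defining representation $\Phi=X^T X$, where $X=[x_1,x_2]$ is $d\times 2$ with independent columns $x_1,x_2\sim N_d(0,I_d)$. Then $\Phi_{11}=\|x_1\|^2$, $\Phi_{22}=\|x_2\|^2$ and $\Phi_{12}=x_1^T x_2$, so that $\varphi=x_1^T x_2/(\|x_1\|\,\|x_2\|)$ is the cosine of the angle between two independent isotropic Gaussian vectors in $\mathbb{R}^d$. Conditioning on $x_1$ and using rotational invariance, $\varphi$ has the same law as the first coordinate of $x_2/\|x_2\|$, i.e. of a point drawn uniformly on the unit sphere $S^{d-1}$. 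The key ingredient is then the classical fact that a single coordinate $t$ of a uniform point on $S^{d-1}$ has density proportional to $(1-t^2)^{(d-3)/2}$ on $(-1,1)$, which yields the marginal density of $\varphi$ directly.

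Finally I would push this density through the map $\varphi\mapsto\varphi^2$. Because the density of $\varphi$ is symmetric, setting $u=\varphi^2$ and accounting for the two-to-one map together with the Jacobian $\tfrac12 u^{-1/2}$ gives a density proportional to $u^{-1/2}(1-u)^{(d-3)/2}=u^{1/2-1}(1-u)^{(d-1)/2-1}$, which is exactly the $Beta(1/2,(d-1)/2)$ density; matching the normalising constant against $B(1/2,(d-1)/2)$ confirms the identification. A purely analytic alternative would write out the $W_2(I_2,d)$ density, change variables from $(\Phi_{11},\Phi_{22},\Phi_{12})$ to $(\phi_1^2,\phi_2^2,\varphi)$, and integrate out $\phi_1^2$ and $\phi_2^2$; the main obstacle in that version is getting the Jacobian of the change of variables right and evaluating the two Gamma integrals, whereas in the geometric route the only real work is justifying the reduction to the uniform sphere and invoking its coordinate marginal, after which the Beta identification is immediate.
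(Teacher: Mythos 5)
Your proof is correct, but it takes a genuinely different route from the paper's. The paper argues directly at the level of the Wishart density: setting $\rho=0$ and changing variables to $(\phi_1^2,\phi_2^2,\varphi)$, it reads off $p(\varphi\mid\rho=0)\propto(1-\varphi^2)^{(d-3)/2}$ and then $p(\varphi^2\mid\rho=0)\propto \varphi^{-1}(1-\varphi^2)^{(d-3)/2}$ --- i.e.\ precisely what you describe as the ``purely analytic alternative'' at the end of your proposal. Your main argument instead reduces to $\Sigma=I_2$ via Wishart scale equivariance (valid, since $\varphi$ is invariant under $\Phi\mapsto D^{-1/2}\Phi D^{-1/2}$ and $D^{-1/2}\Phi D^{-1/2}\sim W_2(I_2,d)$), then uses the Gram representation $\Phi=X^TX$ with independent columns $x_1,x_2\sim N_d(0,I_d)$ and rotational invariance to identify $\varphi$ with one coordinate of a uniform point on $S^{d-1}$, whose marginal density $\propto(1-t^2)^{(d-3)/2}$ is classical; your handling of the two-to-one map $\varphi\mapsto\varphi^2$ with Jacobian $\tfrac12 u^{-1/2}$ is also correct. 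What the geometric route buys is transparency: no change-of-variables Jacobian for the Wishart density, and the $Beta(1/2,(d-1)/2)$ law is recognised immediately from the sphere marginal. What it costs is generality: the representation $\Phi=X^TX$ presupposes an integer number of degrees of freedom $d\geq 2$, whereas the paper defines the Wishart through its density with real degrees of freedom, so its density-level argument covers the proposition as stated without restriction. In the paper's application $d=n$ is the sample size, so the integer restriction is harmless in practice, but to cover the general statement you should either note this restriction or fall back on your analytic alternative for non-integer $d$.
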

\smallskip
\begin{proposition}
\label{prop5}
	The following equality holds:
	\begin{equation*}
	\begin{split}
	Y_a^T Y_a - \bar{B}_{a\mid b}^T &(Y_b^T Y_b + F_{bb}) \bar{B}_{a\mid b} + F_{ab} F_{bb}^{-1} F_{ba}=\\
	&(Y_a - Y_b F_{a\vert b})^T (I_n + Y_b F_{bb}^{-1} Y_b^T)^{-1} (Y_a - Y_b F_{a\vert b}),
	\end{split}
	\end{equation*}
	where $\bar{B}_{a\mid b}=(Y_b^T Y_b + F_{bb})^{-1} (Y_b^T Y_a + F_{ba})$.
\end{proposition}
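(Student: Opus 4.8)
The plan is to exhibit both sides of the identity as the minimum, over the regression-coefficient matrix $B$, of a single penalized least-squares objective, and then evaluate that minimum in two different ways. Writing $M = Y_b^T Y_b + F_{bb}$ and recalling $F_{a\mid b} = F_{bb}^{-1} F_{ba}$, I would introduce
\begin{equation*}
Q(B) = (Y_a - Y_b B)^T (Y_a - Y_b B) + (B - F_{a\mid b})^T F_{bb} (B - F_{a\mid b}),
\end{equation*}
a ridge-type criterion whose penalty is centred at the prior coefficient $F_{a\mid b}$. The claim is that the left-hand side of the proposition is one closed form for $\min_B Q(B)$ and the right-hand side is another, so that the identity follows by equating them.

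For the left-hand side, I would expand $Q(B)$ directly. Setting $c = Y_b^T Y_a + F_{ba}$ and using the symmetry of $F$ (so that $F_{ab} = F_{ba}^T$) together with $F_{bb} F_{a\mid b} = F_{ba}$ and $F_{a\mid b}^T F_{bb} F_{a\mid b} = F_{ab} F_{bb}^{-1} F_{ba}$, the terms collect into
\begin{equation*}
Q(B) = Y_a^T Y_a - B^T c - c^T B + B^T M B + F_{ab} F_{bb}^{-1} F_{ba}.
\end{equation*}
Completing the square via $B^T M B - B^T c - c^T B = (B - M^{-1} c)^T M (B - M^{-1} c) - c^T M^{-1} c$ identifies the minimizer as $\bar{B}_{a\mid b} = M^{-1} c$, and, using $\bar{B}_{a\mid b}^T M \bar{B}_{a\mid b} = c^T M^{-1} c$, yields $\min_B Q(B) = Y_a^T Y_a - \bar{B}_{a\mid b}^T M \bar{B}_{a\mid b} + F_{ab} F_{bb}^{-1} F_{ba}$, which is exactly the left-hand side.

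For the right-hand side, I would instead shift the variable by $\tilde{B} = B - F_{a\mid b}$, so that $Y_a - Y_b B = \tilde{Y}_a - Y_b \tilde{B}$ with $\tilde{Y}_a = Y_a - Y_b F_{a\mid b}$ and $Q = (\tilde{Y}_a - Y_b \tilde{B})^T (\tilde{Y}_a - Y_b \tilde{B}) + \tilde{B}^T F_{bb} \tilde{B}$. The normal equations for this ridge problem give minimizer $M^{-1} Y_b^T \tilde{Y}_a$ and minimum value $\tilde{Y}_a^T (I_n - Y_b M^{-1} Y_b^T) \tilde{Y}_a$. Applying the Woodbury identity $(I_n + Y_b F_{bb}^{-1} Y_b^T)^{-1} = I_n - Y_b M^{-1} Y_b^T$ rewrites this as $(Y_a - Y_b F_{a\mid b})^T (I_n + Y_b F_{bb}^{-1} Y_b^T)^{-1} (Y_a - Y_b F_{a\mid b})$, the right-hand side; since both expressions equal $\min_B Q(B)$, they coincide, proving the proposition. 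The only genuinely nontrivial step is the Woodbury reduction, which trades the $n \times n$ inverse $(I_n + Y_b F_{bb}^{-1} Y_b^T)^{-1}$ appearing on the right for the low-dimensional inverse $M^{-1}$ appearing on the left; everything else is bookkeeping. One should also note that $M$ is invertible, which holds because $F = (\delta-p-1)D$ is positive definite (recall $D$ is positive definite and $\delta > p+1$), hence so is its principal submatrix $F_{bb}$, while $Y_b^T Y_b$ is positive semidefinite.
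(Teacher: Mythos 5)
Your proof is correct, and it takes a genuinely different route from the paper's. The paper proves Proposition~\ref{prop5} by brute-force matrix algebra: it expands the left-hand side into three terms built around $I_n - Y_b(Y_b^T Y_b + F_{bb})^{-1}Y_b^T$, establishes the auxiliary identities $Y_b(Y_b^T Y_b + F_{bb})^{-1}F_{bb} = \{I_n - Y_b(Y_b^T Y_b + F_{bb})^{-1}Y_b^T\}Y_b$ and $F_{bb}-F_{bb}(Y_b^T Y_b + F_{bb})^{-1}F_{bb} = Y_b^T\{I_n - Y_b(Y_b^T Y_b + F_{bb})^{-1}Y_b^T\}Y_b$ (the latter via two applications of Sherman--Morrison--Woodbury), substitutes, and applies Woodbury once more at the end. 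You instead exhibit both sides as the constant term of two completions of the square of the single ridge objective $Q(B)$, with Woodbury invoked exactly once. This buys two things. First, transparency: your $Q(B)$ is precisely the matrix appearing in the exponent of the joint density $p(Y_a, B_{a\mid b}\mid Y_b, \Sigma_{aa.b})$ in the paper's proof of Proposition~\ref{prop6}, so your argument reveals Proposition~\ref{prop5} as the completion-of-squares at the heart of that later proof rather than an isolated algebraic identity. Second, robustness: the paper's chain passes through the expression $\{F_{bb}^{-1} + (Y_b^T Y_b)^{-1}\}^{-1}$, which is only formal when $n < p-2$ (exactly the high-dimensional regime the paper targets, although the two endpoints of that chain remain equal there), whereas your derivation never requires $Y_b^T Y_b$ to be invertible -- you correctly pinned the needed invertibility on $M = Y_b^T Y_b + F_{bb}$ via positive definiteness of $F_{bb}$. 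One small point worth making explicit: $Q(B)$ is $2\times 2$ matrix-valued, so ``minimum'' should be read in the Loewner order; cleaner still, the minimization language is dispensable, since two completed-square representations $Q(B) = (B-\bar{B})^T M (B-\bar{B}) + C$ of the same quadratic must share center and constant (match the $B$-linear terms using invertibility of $M$), and indeed your shifted center $F_{a\mid b} + M^{-1}Y_b^T(Y_a - Y_b F_{a\mid b})$ coincides with $\bar{B}_{a\mid b}$, so the two constants -- the two sides of the proposition -- are equal by pure algebra.
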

\smallskip
\begin{proposition}
\label{prop6}
	Let $\Sigma_{aa.b}$ be fixed. Then, according to model~\eqref{model1alt} we have $$(Y_a - Y_b F_{a\vert b})^T (I_n + Y_b F_{bb}^{-1} Y_b^T)^{-1} (Y_a - Y_b F_{a\vert b})\sim W_2(\Sigma_{aa.b}, n).$$
\end{proposition}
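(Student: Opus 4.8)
The plan is to integrate out the regression coefficient $B_{a\mid b}$ from the two-level hierarchy in \eqref{model1alt} while holding $\Sigma_{aa.b}$ fixed, identify the resulting matrix-variate normal law of the residual $Y_a - Y_b F_{a\mid b}$, and then recognise the quadratic form as a sum of $n$ independent rank-one pieces, i.e.\ a Wishart matrix.

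First I would note that, given $B_{a\mid b}$, the first line of \eqref{model1alt} gives $\text{vec}(Y_a)$ a normal law with mean $\text{vec}(Y_b B_{a\mid b})$ and covariance $\Sigma_{aa.b}\otimes I_n$, while the second line makes $\text{vec}(B_{a\mid b})$ normal with mean $\text{vec}(F_{a\mid b})$ and covariance $\Sigma_{aa.b}\otimes F_{bb}^{-1}$. Since $\text{vec}(Y_b B_{a\mid b}) = (I_2\otimes Y_b)\text{vec}(B_{a\mid b})$ is a linear image of $B_{a\mid b}$, this is a linear-Gaussian hierarchy, so marginally over $B_{a\mid b}$ the vector $\text{vec}(Y_a)$ is again normal. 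Its mean is $(I_2\otimes Y_b)\text{vec}(F_{a\mid b}) = \text{vec}(Y_b F_{a\mid b})$, and its covariance is the sum of the conditional covariance $\Sigma_{aa.b}\otimes I_n$ and the covariance propagated through the random mean,
\begin{equation*}
(I_2\otimes Y_b)(\Sigma_{aa.b}\otimes F_{bb}^{-1})(I_2\otimes Y_b^T) = \Sigma_{aa.b}\otimes (Y_b F_{bb}^{-1}Y_b^T).
\end{equation*}
Adding these and writing $M = I_n + Y_b F_{bb}^{-1}Y_b^T$, I obtain $\text{vec}(Y_a - Y_b F_{a\mid b})\sim N(0, \Sigma_{aa.b}\otimes M)$.

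Next I would whiten the row covariance. Because $F_{bb}^{-1}$ is positive definite, $M$ is positive definite, so its symmetric square root $M^{1/2}$ is invertible. Setting $Z = M^{-1/2}(Y_a - Y_b F_{a\mid b})$, the identity $\text{vec}(Z) = (I_2\otimes M^{-1/2})\text{vec}(Y_a - Y_b F_{a\mid b})$ together with the Kronecker-product covariance computed above yields $\text{vec}(Z)\sim N(0, \Sigma_{aa.b}\otimes I_n)$; equivalently, the $n$ rows of $Z$ are independent and identically distributed $N_2(0, \Sigma_{aa.b})$.

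Finally I would observe that $Z^T Z = (Y_a - Y_b F_{a\mid b})^T M^{-1}(Y_a - Y_b F_{a\mid b})$ is exactly $\sum_{k=1}^n z_k z_k^T$, where $z_k$ denotes the $k$-th row of $Z$. A sum of $n$ independent outer products of $N_2(0, \Sigma_{aa.b})$ vectors is by definition $W_2(\Sigma_{aa.b}, n)$ in the degrees-of-freedom convention of section~\ref{bg:not}, which is the claim. The only delicate step is the marginalisation: one must track carefully that the column factor $\Sigma_{aa.b}$ is common to both the sampling covariance of $Y_a$ and the prior covariance of $B_{a\mid b}$, so that it factors out of the Kronecker sum and leaves the clean structure $\Sigma_{aa.b}\otimes M$; everything after that is routine bookkeeping with the $vec$ operator and the square-root transformation.
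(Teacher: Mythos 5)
Your proof is correct, and it passes through the same two milestones as the paper's proof --- the matrix-normal law $\text{vec}(Y_a - Y_b F_{a\mid b}) \sim N\left(0,\, \Sigma_{aa.b}\otimes M\right)$ with $M = I_n + Y_b F_{bb}^{-1}Y_b^T$, followed by whitening with $M^{-1/2}$ and the characterisation of the Wishart --- but you reach the first milestone by a genuinely different route. The paper integrates $B_{a\mid b}$ out of the joint density explicitly: it completes the square in $B_{a\mid b}$ inside the trace and then invokes Proposition~\ref{prop5} (itself proved via repeated use of the Sherman--Morrison--Woodbury identity) to recognise the left-over quadratic form as $(Y_a - Y_b F_{a\vert b})^T M^{-1} (Y_a - Y_b F_{a\vert b})$. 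You instead treat \eqref{model1alt} as a linear-Gaussian hierarchy and obtain the marginal covariance by propagation, $\Sigma_{aa.b}\otimes I_n + (I_2\otimes Y_b)(\Sigma_{aa.b}\otimes F_{bb}^{-1})(I_2\otimes Y_b^T) = \Sigma_{aa.b}\otimes M$, which bypasses Proposition~\ref{prop5} and all Woodbury manipulations entirely; your closing observation that the common column factor $\Sigma_{aa.b}$ is what allows the Kronecker sum to collapse is precisely the right structural point, and all your vec/Kronecker computations check out under the paper's conventions (columns stacked, covariance $\Sigma\otimes I_n$ meaning independent rows). Your route is shorter and more conceptual; what the paper's density computation buys is Proposition~\ref{prop5} as a standalone identity, which it needs independently in Section~\ref{multest:mult} to write $T_{aa.b} = F_{aa.b} + Z$ (the decomposition underlying the exact null-distribution argument), and which also exhibits the posterior mean $\bar{B}_{a\mid b}$ along the way. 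One cosmetic difference: you conclude via the definition of $W_2(\Sigma_{aa.b}, n)$ as a sum of $n$ independent outer products of $N_2(0,\Sigma_{aa.b})$ rows, whereas the paper cites \citet[Theorem 3.2.2.]{gupta2000}; these agree, so nothing is at stake.
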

\bigskip

We observe that the only term of the Bayes factor for marginal independence (defined in Corollary~\ref{cor1}) that depends on the data is  $$r_{t_{ij}} = \frac{(f_{ii}f_{jj})^{1/2}r_{f_{ij}} + (s_{ii}s_{jj})^{1/2}r_{s_{ij}}^{}}{(f_{ii}+s_{ii})^{1/2}(f_{jj}+s_{jj})^{1/2}},$$ via the elements of $S_{aa}=Y^T_a Y_a=\{s_{ij}\}$. Here $r_{s_{ij}}=s_{ij}(s_{ii}s_{jj})^{-1/2}$. This means that we can write
$$pr\left\lbrace\text{sBF}_{ij}^{\text{M}} > b_1 \right\rbrace = pr\left\lbrace r_{s_{ij}}^2 > c_1 \right\rbrace ,$$
where $c_1$ is a quantity that depends on $\{\delta, n, f_{ii}, f_{jj}, r_{f_{ij}}, s_{ii}, s_{jj}\}$. Now, according to our model in equation~\eqref{model2alt} it is easily verified that $S_{aa}\sim W_2(\Sigma_{aa}, n)$ and, using Proposition~\ref{prop4}, we can establish that $r_{s_{ij}}^2 \mid \text{H}_{0,ij}^{\text{M}} \sim Beta(1/2, (n-1)/2)$. The tail probability of the Bayes factor can therefore be computed exactly using $Beta(1/2, (n-1)/2)$. We remark that the definition of the type I error is here conditioning on $\{\delta, n, f_{ii}, f_{jj}, r_{f_{ij}}, s_{ii}, s_{jj}\}$.

A similar argument holds for obtaining the tail probability associated with the Bayes factor for conditional independence defined in Corollary~\ref{cor2}. The only term of the Bayes factor that depends on the data is $r_{q_{ij}}=q_{ij}(q_{ii}q_{jj})^{-1/2}$, where, we recall, $q_{ij}$ is such that $T_{aa.b}=\{ q_{ij} \}$. Proposition~\ref{prop5} suggests that we can write
$T_{aa.b} = F_{aa.b} + Z$,
with $Z=(Y_a - Y_b F_{a\vert b})^T (I_n + Y_b F_{bb}^{-1} Y_b^T)^{-1} (Y_a - Y_b F_{a\vert b})$. As a result,
$$r_{q_{ij}} = \frac{(g_{ii}g_{ii})^{1/2}r_{g_{ij}} + (z_{ii}z_{jj})^{1/2}r_{z_{ij}}}{(g_{ii}+z_{ii})^{1/2}(g_{jj}+z_{jj})^{1/2}},$$ where $Z=\{z_{ij}\}$ and $r_{z_{ij}}=z_{ij}(z_{ii}z_{jj})^{-1/2}$. This means that we can write
$$Pr\left\lbrace\text{sBF}_{ij}^{\text{C}} > b_2 \right\rbrace = Pr\left\lbrace r_{z_{ij}}^2 > c_2 \right\rbrace ,$$
where $c_2$ is a quantity that depends on $\{\delta, n, g_{ii}, g_{jj}, r_{g_{ij}}, z_{ii}, z_{jj}\}$.
Propositions~\ref{prop4} and~\ref{prop6} tell us that $Z\sim W_2(\Sigma_{aa.b}, n)$ and that $r_{z_{ij}}^2 \mid \text{H}_{0,ij}^{\text{C}} \sim Beta(1/2, (n-1)/2)$. Therefore, the tail probability of the Bayes factor can also be computed exactly using $Beta(1/2, (n-1)/2)$. We remark that the definition of the type I error is conditioning on $\{\delta, n, g_{ii}, g_{jj}, r_{g_{ij}}, z_{ii}, z_{jj}\}$.

\section{Numerical experiments}
\label{num}

\subsection{Comparison to Bayesian methods}
\label{num:bmeth}

In this section we compare the performance of our approach with other Bayesian methods. For computational reasons, we consider a moderate-dimensional problem. We generate 50 datasets of size $n\in\{25,50,100\}$ from a multivariate Gaussian distribution with mean vector $0$ and $50\times 50$ inverse covariance matrix $\Phi$. The matrix $\Phi$ is a sparse matrix which we generate from a G-Wishart distribution with scale matrix equal to the identity and $b=4$ degrees of freedom (using the function bdgraph.sim of R package BDgraph). Four different graph structures are considered which we illustrate in Figure~\ref{structure}.
\begin{figure}[h]
	\begin{minipage}[c]{0.25\linewidth}
    \centering
  		\subfigure[][Band]{
  			\includegraphics[width=1.00\textwidth]{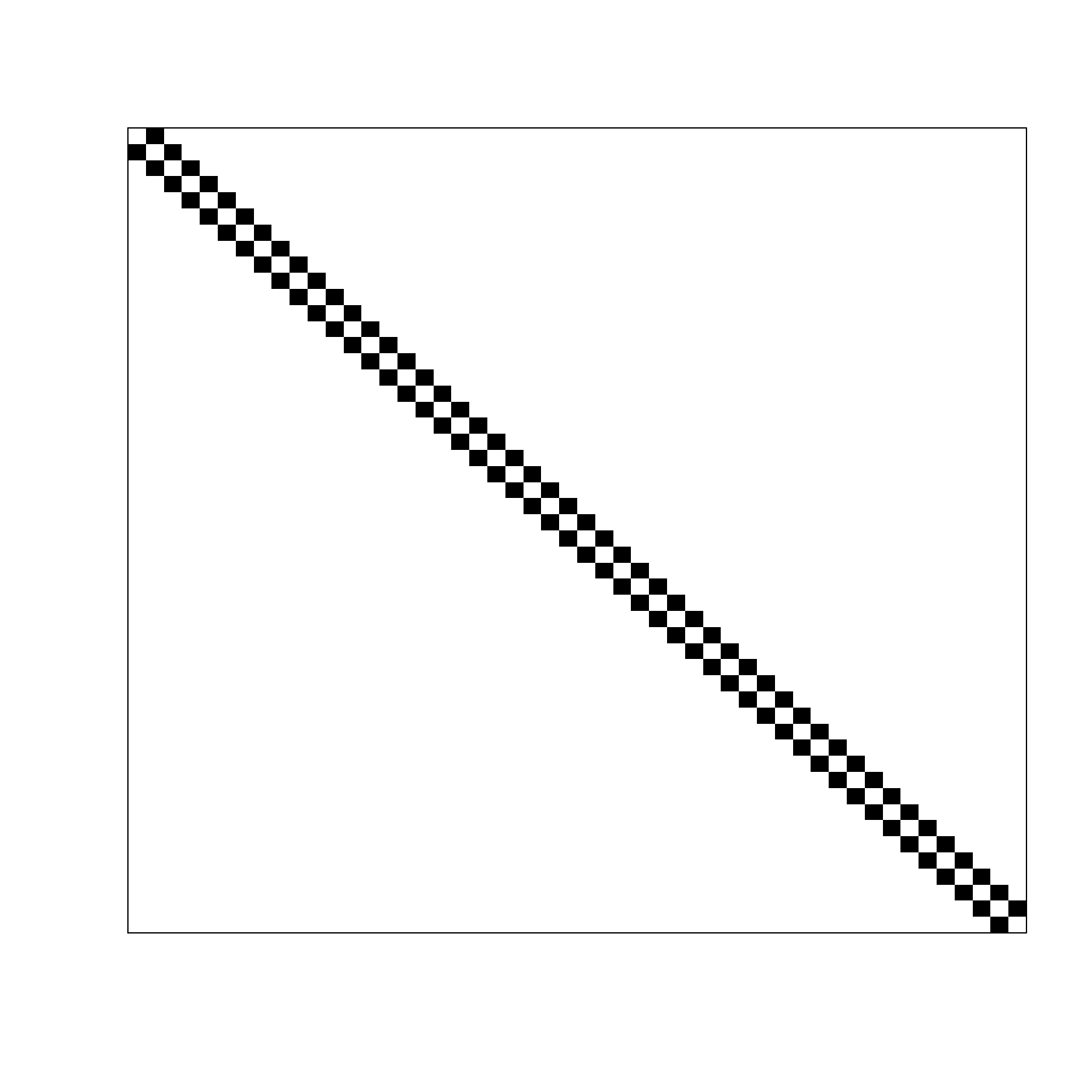}
  			\label{structure:sub1}
  		}
  \end{minipage}\hfill
	\begin{minipage}[c]{0.25\linewidth}
    \centering
  		\subfigure[][Cluster]{
  			\includegraphics[width=1.00\textwidth]{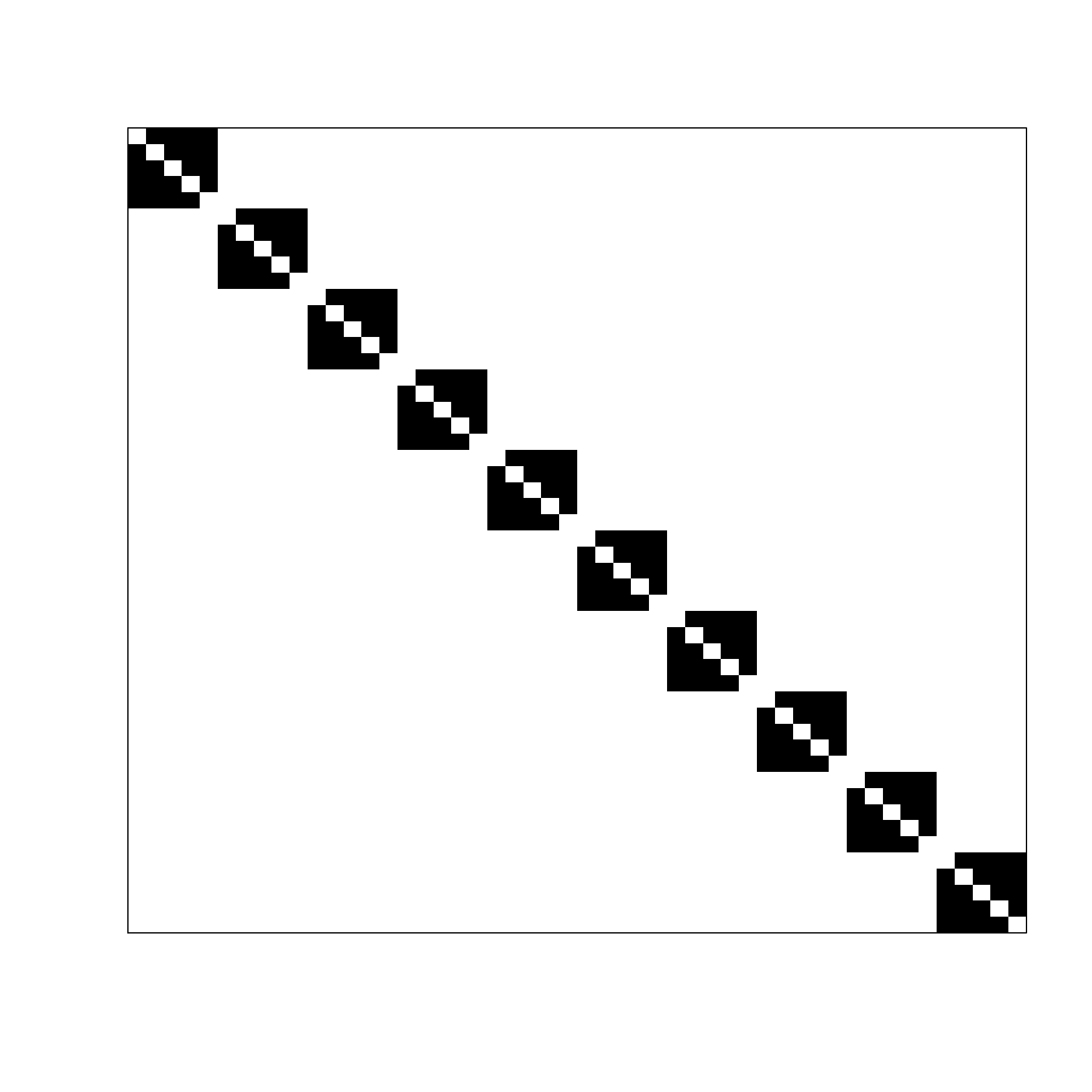}
  			\label{structure:sub2}
  		}
  \end{minipage}\hfill
  \begin{minipage}[c]{0.25\linewidth}
    \centering
  		\subfigure[][Hub]{
  			\includegraphics[width=1.00\textwidth]{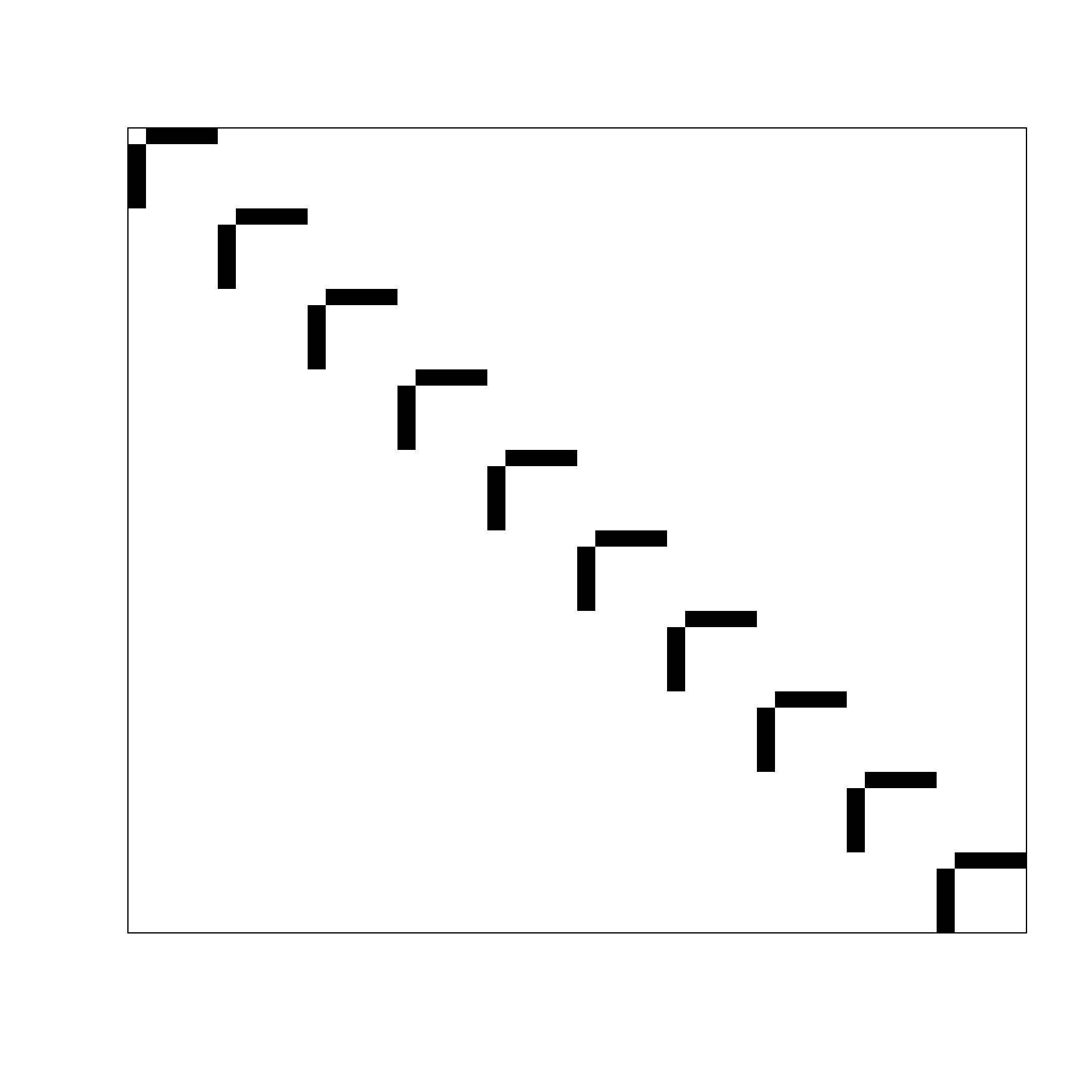}
  			\label{structure:sub3}
  		}
  \end{minipage}\hfill
  \begin{minipage}[c]{0.25\linewidth}
    \centering
  		\subfigure[][Random]{
  			\includegraphics[width=1.00\textwidth]{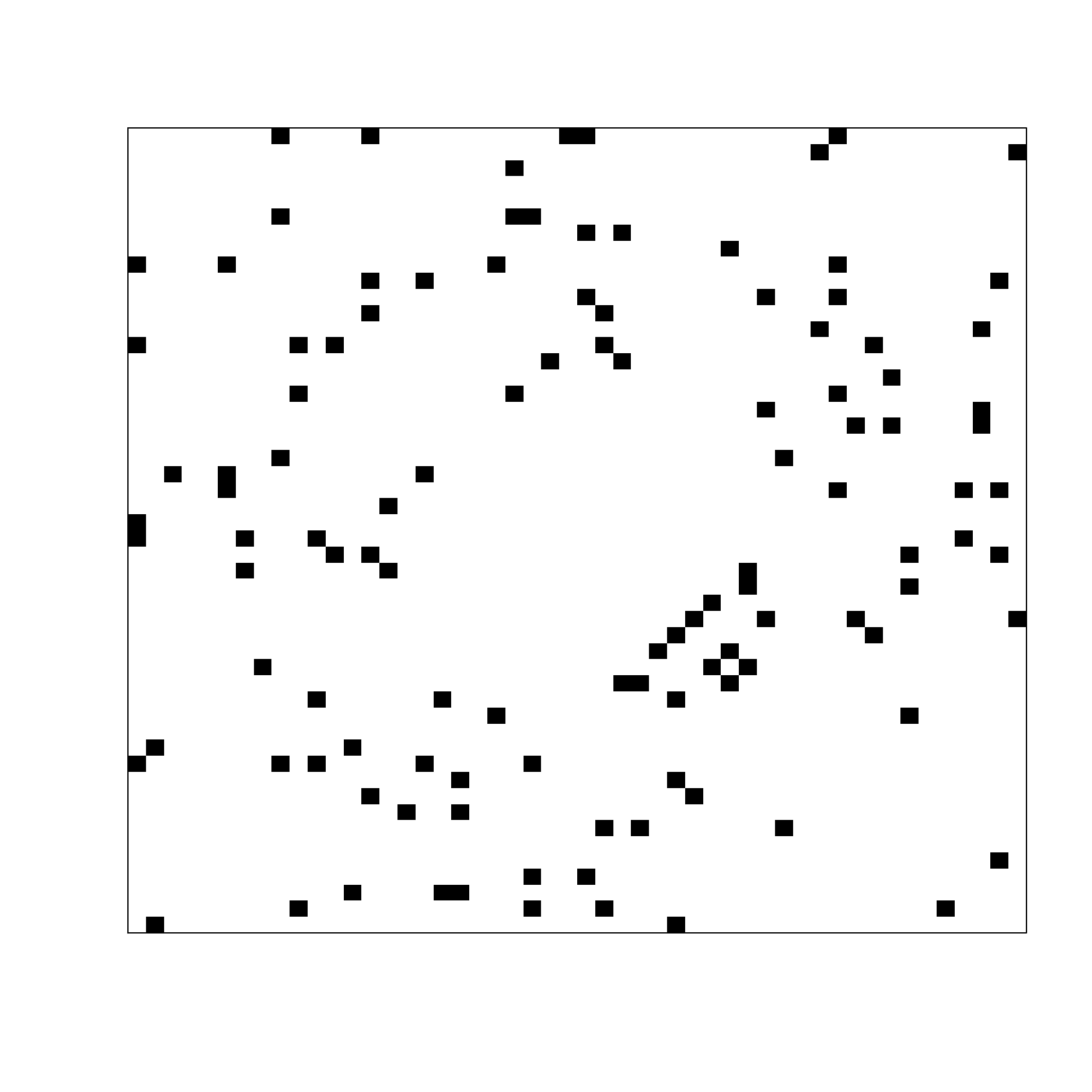}
  			\label{structure:sub4}
  		}
  \end{minipage}\hfill
		\caption{Graph structures considered in the simulation. Black and white dots represent non-zero and zero entries in $\Phi$, respectively. Only off-diagonal elements are displayed. The graph density $\eta$, that is the ratio of the number of edges and the number of possible edges, is (a) $\eta=0.080$, (b) $\eta=0.167$, (c) $\eta=0.067$ and (d) $\eta=0.083$.}
		\label{structure}
\end{figure}

We compare our method, implemented in the R package beam, to two sampling-based approaches based on the birth-death and reversible jump Markov chain Monte Carlo algorithms, developed by \citet{mohammadi2015, mohammadi2017} and implemented in the R package BDgraph, using 100,000 sweeps and a burn-in period of 50,000 updates. We also consider the method of \citet{schwaller2017}, implemented in the R package saturnin, that offers closed-form inference within the class of tree-structured graphical models. For each method we obtain the marginal posterior probabilities of edge inclusion, either via the sampling algorithm or exactly.

To evaluate the performance of the methods in recovering the different graph structures we report the area under the receiver operating characteristic  (ROC) curve which depicts the true positive rate, $TPR = TP/(TP+FN)$, as a function of the false positive rate, $FPR = (FP)/(FP+TP)$, overall possible thresholds on the marginal posterior probabilities of edge inclusion (or tail probabilities in case of our method). Here, the quantities $TP$, $FP$, $FN$ denote the number of true positives, false positives and false negatives, respectively. We also report the area under the precision-recall (PR) curve which depict the precision, $PR = TP/(TP+FP)$, as a function of the true positive rate (also referred to as recall).\\

\begin{table}[ht]\footnotesize
\centering
\begin{tabular}{cccccc}
    \hline\hline
      $n$ & Method & AUC$_\text{ROC}$ & AUC$_\text{PR}$ & AUC$_\text{ROC}$ & AUC$_\text{PR}$\\
  \hline \\[-5pt]
  & & \multicolumn{2}{c}{Band structure} & \multicolumn{2}{c}{Cluster structure}  \\[1pt]
100 & \textsc{beam} & \textbf{0.89 (0.02)} & 0.65 (0.03)  & \textbf{0.80 (0.02)} & \textbf{0.54 (0.03)} \\  
  100 & \textsc{bdmcmc} & \textbf{0.89 (0.03)} & \textbf{0.67 (0.03)}  &   0.79 (0.02) &   0.51 (0.04) \\  
  100 & \textsc{rjmcmc} & 0.88 (0.03) & 0.63 (0.05) &   0.78 (0.03) &   0.50 (0.04) \\  
  100 & \textsc{saturnin} & \textbf{0.89 (0.02)} & 0.61 (0.04) & 0.77 (0.02) & 0.53 (0.04) \\[2pt]
  50 & \textsc{beam} & \textbf{0.84 (0.03)} & \textbf{0.53 (0.04)} & \textbf{0.73 (0.02)} & \textbf{0.39 (0.04)} \\ 
  50 & \textsc{bdmcmc} &   0.82 (0.03) &   0.51 (0.06) &   0.72 (0.03) &   0.37 (0.04) \\ 
  50 & \textsc{rjmcmc} &   0.81 (0.03) &   0.47 (0.05) &   0.72 (0.02) &   0.35 (0.04) \\ 
  50 & \textsc{saturnin} & 0.82 (0.02) & 0.44 (0.04)  & 0.68 (0.02) & 0.33 (0.04) \\[2pt]
  25 & \textsc{beam} & \textbf{0.78 (0.04)} & \textbf{0.39 (0.05)} & \textbf{0.66 (0.03)} & \textbf{0.24 (0.04)} \\
  25 & \textsc{bdmcmc} &   0.75 (0.04) &   0.32 (0.05)  &   0.65 (0.03) &   0.23 (0.03) \\  
  25 & \textsc{rjmcmc} &   0.75 (0.04) &   0.27 (0.05)&   0.64 (0.03) &   0.22 (0.03) \\  
  25 & \textsc{saturnin} & 0.73 (0.03) & 0.28 (0.05)  & 0.58 (0.02) & 0.15 (0.02) \\[2pt]
  & & \multicolumn{2}{c}{Hub structure} & \multicolumn{2}{c}{Random structure}  \\[1pt]
  100 & \textsc{beam} & 0.88 (0.03) & 0.62 (0.03) & \textbf{0.87 (0.03)} & 0.65 (0.03) \\  
  100 & \textsc{bdmcmc} &   0.89 (0.02) &   \textbf{0.67 (0.04)} &   0.86 (0.03) &   \textbf{0.66 (0.03)} \\  
  100 & \textsc{rjmcmc} &   0.89 (0.02) &   0.65 (0.05) &   0.85 (0.03) &   0.65 (0.04) \\  
  100 & \textsc{saturnin} & \textbf{0.92 (0.01)} & 0.63 (0.02) & 0.86 (0.02) & 0.59 (0.02) \\[2pt] 
  50 & \textsc{beam} & 0.84 (0.03) & \textbf{0.53 (0.03)} & \textbf{0.83 (0.03)} & \textbf{0.56 (0.04)} \\  
  50 & \textsc{bdmcmc} &   0.84 (0.03) &   0.52 (0.05) &   0.81 (0.03) &   0.53 (0.05) \\  
  50 & \textsc{rjmcmc} &   0.84 (0.03) &   0.48 (0.06) &   0.80 (0.03) &   0.49 (0.06) \\  
  50 & \textsc{saturnin} & \textbf{0.86 (0.02)} & 0.48 (0.03) & \textbf{0.83 (0.02)} & 0.47 (0.03) \\[2pt]
  25 & \textsc{beam} & \textbf{0.80 (0.03)} & \textbf{0.42 (0.04)} & \textbf{0.79 (0.03)} & \textbf{0.43 (0.05)} \\ 
  25 & \textsc{bdmcmc} &   0.79 (0.04) &   0.32 (0.05) &   0.75 (0.02) &   0.33 (0.05) \\  
  25 & \textsc{rjmcmc} &   0.77 (0.04) &   0.27 (0.04) &   0.74 (0.03) &   0.30 (0.05) \\  
  25 & \textsc{saturnin} & \textbf{0.80 (0.03)} & 0.35 (0.04) & 0.77 (0.02) & 0.35 (0.04) \\
   \hline\hline
\end{tabular}
\caption{Average and standard deviation (in parenthesis) of areas under the receiver operating characteristic and precision-recall curves over the simulated datasets, as a function of the true graph structure and sample size $n$. \textsc{beam}, our method; \textsc{bdmcmc} and \textsc{rjmcmc} methods of \citet{mohammadi2015}; \textsc{saturnin} method of \citet{schwaller2017}; AUC$_\text{ROC}$, area under the receiver operating characteristic curve; AUC$_\text{PR}$ area under the precision-recall curve. Best performances are boldfaced.}
\label{simBayes}
\end{table}

Table~\ref{simBayes} summarizes simulation results. It shows that our method performs well compared to other Bayesian methods in recovering the different graph structures. For instance, our method often achieves the largest areas under the receiver operating characteristic and precision-recall curves for different graph structures and sample sizes. Moreover, a marked improvement is observed in cases where the sample size is small ($n=25$).

The results also show non-negligible differences in performance between the birth-death and reversible jump Markov chain Monte Carlo algorithms. These differences does not seem dependent of the graph structure or sample size. This suggests that the performance of the sampling-based methods can be affected by the choice of sampling algorithm.

Overall, the simulation results demonstrate that our method can recover various graphical structures at least as accurately as other Bayesian approaches at a very low computation cost (see Figure~\ref{Plottime}). Our method achieves generally a greater area under the precison-recall curve than others. The present results also confirm that obtained by \citet{schwaller2017}, namely the relative good performance of tree-structured graphical models compared to sampling-based approaches despite stronger restrictions on the class of graphical models. However, the performance of the approach can degrade in somes cases (e.g. cluster structures).

\begin{figure}[h]
	\centering
 	\includegraphics[width=0.40\textwidth]{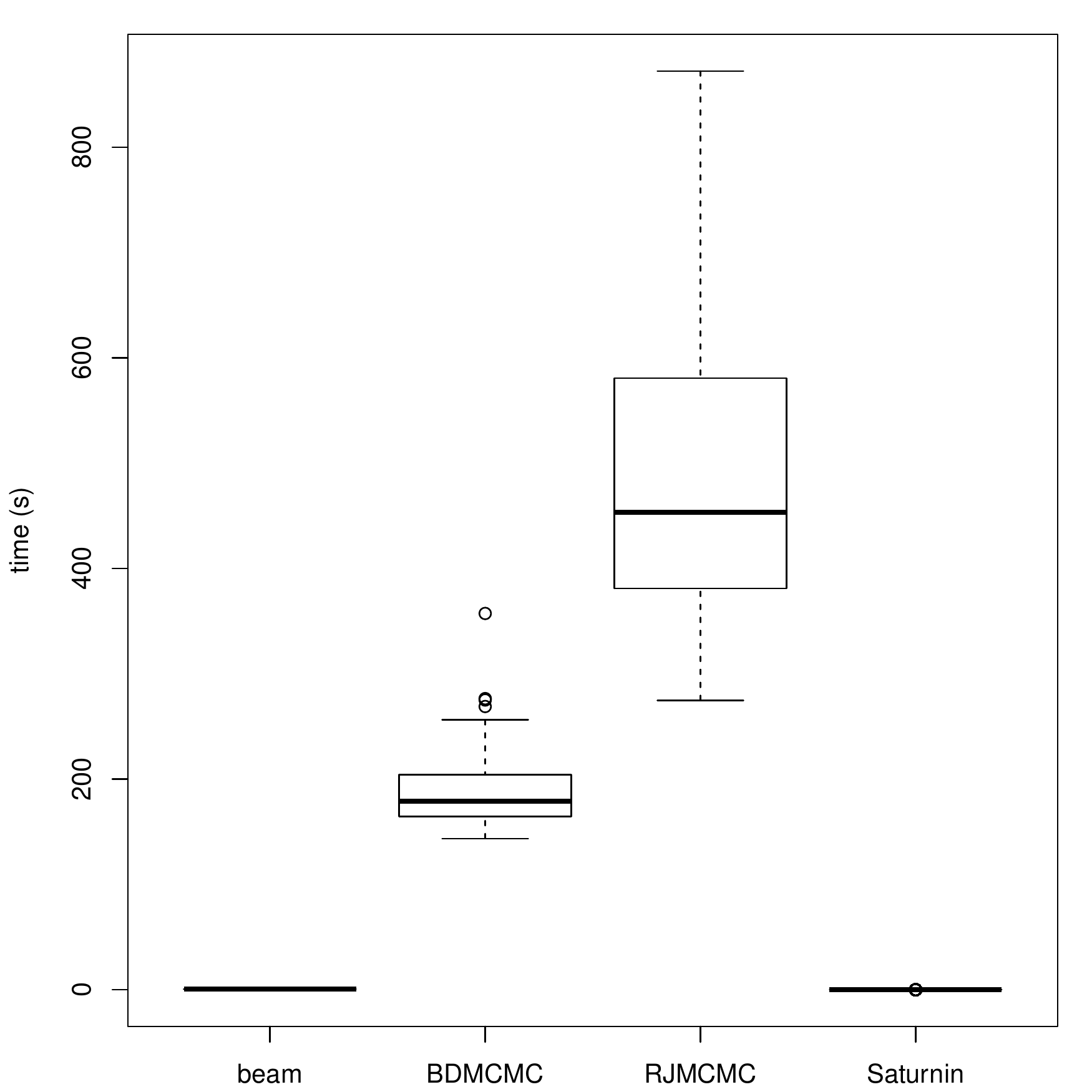}
	\caption{Running time in seconds (assessed on 3.40GHz Intel Core i7-3770 CPU) for each Bayesian method under comparison.}
	\label{Plottime}
\end{figure}

In conclusion, we remark that the marginal posterior probabilities of edge inclusion obtained from each method can in principle be linked to a Bayesian version of the false discovery rate to carry out edge selection with error control \citep{mitra2013, baladandayuthapani2014, peterson2015}. To see this, let $\pi_{ij}=Pr(\text{H}_{1,ij}^{\text{C}} \mid Y)$ be the marginal posterior probability of inclusion for edge $(i,j)$, $1<i<j<p$, for a given method. Then, its complement $Pr(\text{H}_{0,ij}^{\text{C}} \mid Y)=1-\pi_{ij}=\text{blfdr}_{ij}$ can indeed be viewed as a Bayesian version of the local false discovery rate \citep{efron2001} where the conditioning is on the data rather than a statistic. This connection serves to define the following Bayesian version of the false discovery rate \citep{newton2004}:
\begin{equation*}
	\text{BFDR}_t = E(\text{blfdr}_{ij} \mid \text{blfdr}_{ij}<t) = \frac{\sum_{i,j}{(1-\pi_{ij})I_{\pi_{ij}>t}}}{\sum_{i,j}{I_{\pi_{ij}>t}}}.
\end{equation*}
The control of the false discovery rate therefore relies heavily on the appropriate calibration of the marginal posterior probabilities of edge inclusion. The prior distribution obviously plays an important role in the quality of such calibration, however, the latter may also be affected by the sampling algorithm. Due to inherent differences between the different Bayesian approaches under comparison in this simulation study, it appears difficult to achieve a fair comparison on the control of the false discovery rate. Such comparison is therefore omitted here.

\subsection{Comparison to non-Bayesian methods}
\label{num:nbmeth}

The performance of the proposed method is compared in higher dimensional settings to non-Bayesian approaches that carry out graphical model selection via multiple testing. We generate 50 datasets of size $n=100$ from a $p$-dimensional Gaussian distribution mean vector $0$ and inverse covariance matrix $\Psi$. Throughout the simulation, we fix the sample size $n=100$ and vary of the dimensionality $p\in\{200,500,1000\}$. We consider four different sparse precision matrices corresponding to different graph structures (similar to those illustrated in Figure~\ref{structure}):
\begin{enumerate}
	\item band structure: $\Psi_p^{\text{band}}$ is a tridiagonal matrix,\\[-20pt]
	\item cluster structure: $\Psi_p^{\text{cluster}}$ is a block diagonal matrix whose diagonal blocks are sparse matrices of size $20$ where the off-diagonal entries of non-zero with probability 0.1.\\[-20pt]
	\item hub structure: $\Psi_p^{\text{hub}}$ is a block diagonal matrix whose diagonal blocks are sparse matrices of size $20$ where only the off-diagonal entries in the first row and column are non-zero,\\[-20pt]
	\item random structure: $\Psi_p^{\text{random}}$ is obtained by randomly permuting the rows and columns of $\Psi_p^{\text{band}}$.
\end{enumerate}
For all precision matrices the non-zero entries are generated independently from a uniform distribution on $[-1,1]$ and positive definiteness is ensured by adding a constant to the diagonal so that the minimum eigenvalue is equal to 0.1.

We compare our method to that of \citet{schafer2005}, implemented in the R package GeneNet, that is based on a linear shrinkage estimator of the covariance matrix \citep{ledoit2004} and a mixture model for false discovery rate estimation \citep{strimmer2008}. We also consider the asymptotic normal thresholding method of \citet{ren2015} that is implemented in the R package FastGGM \citep{wang2016}. For both methods we obtain P-values associated with the estimated partial correlations, whereas for our method we use the tail probabilities associated with the Bayes factor defined in Corollary~\ref{cor1} for all pairs of variables.

As in the previous section, we compare the performance of the methods using the areas under the receiver operating characteristic and precision-recall curves.

Table~\ref{simNonBayes} shows that the proposed method performs well in recovering large graphical structures compared to non-Bayesian methods. It achieves comparable areas under the receiver operating characteristic and precision-recall curves as other methods for different problem sizes. However, in the case of hub structures the proposed method performs better.

Besides recovering accurately the different graphical structures, Figure~\ref{Plottime2} shows that the proposed method is the fastest. When $p=1000$, the average computational time is less than a second whereas contenders are 5 to 20 times slower.\\

\begin{table}[ht]\footnotesize
\centering
\begin{tabular}{cccccc}
    \hline\hline
      $p$ & Method & AUC$_\text{ROC}$ & AUC$_\text{PR}$ & AUC$_\text{ROC}$ & AUC$_\text{PR}$ \\
  \hline \\[-5pt]
  & & \multicolumn{2}{c}{Band structure} & \multicolumn{2}{c}{Cluster structure} \\[1pt]
200 & \textsc{beam} & 0.88 (0.01) & 0.55 (0.02) & \textbf{0.91 (0.01)} & 0.58 (0.01)  \\  
  200 & \textsc{GeneNet} & \textbf{0.89 (0.01)} & \textbf{0.57 (0.02)} & \textbf{0.91 (0.01)} & 0.59 (0.01) \\  
  200 & \textsc{FastGGM} & 0.87 (0.01) & \textbf{0.57 (0.02)} & 0.89 (0.01) & \textbf{0.60 (0.02)} \\[2pt] 
  500 & \textsc{beam} & \textbf{0.91 (0.01)} & 0.58 (0.01) & \textbf{0.89 (0.01)} & 0.50 (0.01) \\  
  500 & \textsc{GeneNet} & \textbf{0.91 (0.01)} & 0.60 (0.01) & \textbf{0.89 (0.01)} & \textbf{0.52 (0.01)}\\ 
  500 & \textsc{FastGGM} & 0.90 (0.01) & \textbf{0.61 (0.01)} & 0.85 (0.01) & 0.49 (0.01) \\[2pt] 
  1000 & \textsc{beam} & \textbf{0.88 (0.01)} & 0.49 (0.01) & \textbf{0.90 (0.00)} & 0.48 (0.01) \\   
  1000 & \textsc{GeneNet} & \textbf{0.88 (0.01)} & 0.49 (0.01) & \textbf{0.90 (0.00)} & \textbf{0.49 (0.01)} \\  
  1000 & \textsc{FastGGM} &  0.87 (0.01) &  \textbf{0.51 (0.01)} &  0.87 (0.00) &  0.48 (0.01) \\[2pt] 
  & & \multicolumn{2}{c}{Hub structure} & \multicolumn{2}{c}{Random structure}  \\[1pt] 
  200 & \textsc{beam} & \textbf{0.90 (0.01)} & \textbf{0.56 (0.01)} & \textbf{0.86 (0.01)} & 0.43 (0.02) \\  
  200 & \textsc{GeneNet} & 0.85 (0.01) & 0.21 (0.03) & \textbf{0.86 (0.01)} & \textbf{0.47 (0.02)} \\ 
  200 & \textsc{FastGGM} & 0.87 (0.01) & 0.46 (0.02) & 0.85 (0.01) & \textbf{0.47 (0.02)} \\[2pt]  
  500 & \textsc{beam} & \textbf{0.92 (0.01)} & \textbf{0.54 (0.01)} & \textbf{0.82 (0.01)} & \textbf{0.35 (0.01)} \\ 
  500 & \textsc{GeneNet} & 0.90 (0.00) & 0.43 (0.01) & \textbf{0.82 (0.01)} & 0.34 (0.01) \\  
  500 & \textsc{FastGGM} & 0.88 (0.01) & 0.44 (0.01) & 0.81 (0.00) & 0.34 (0.01) \\[2pt]  
  1000 & \textsc{beam} & \textbf{0.93 (0.00)} & \textbf{0.54 (0.01)} & \textbf{0.77 (0.00)} & \textbf{0.22 (0.01)} \\ 
  1000 & \textsc{GeneNet} & 0.92 (0.00) & 0.49 (0.01) & \textbf{0.77 (0.00)} & 0.21 (0.01) \\  
  1000 & \textsc{FastGGM} &  0.89 (0.00) &  0.44 (0.01) & \textbf{0.77 (0.00)} &  \textbf{0.22 (0.01)} \\ 
   \hline\hline
\end{tabular}
\caption{Average and standard deviation (in parenthesis) areas under the receiver operating characteristic and precision-recall curves over the simulated datasets, and as a function of the true graph structure and sample size $n$. \textsc{beam}, our method; \textsc{saturnin} method of \citet{schwaller2017}; \textsc{GeneNet} method of \citet{schafer2005}; \textsc{FastGGM} method of \citet{ren2015}; AUC$_\text{ROC}$, area under the receiver operating characteristic curve; AUC$_\text{PR}$ area under the precision-recall curve.
}
\label{simNonBayes}
\end{table}

\begin{figure}[h]
	\begin{minipage}[c]{0.33\linewidth}
    \centering
  		\subfigure[][$p=200$]{
  			\includegraphics[width=1.00\textwidth]{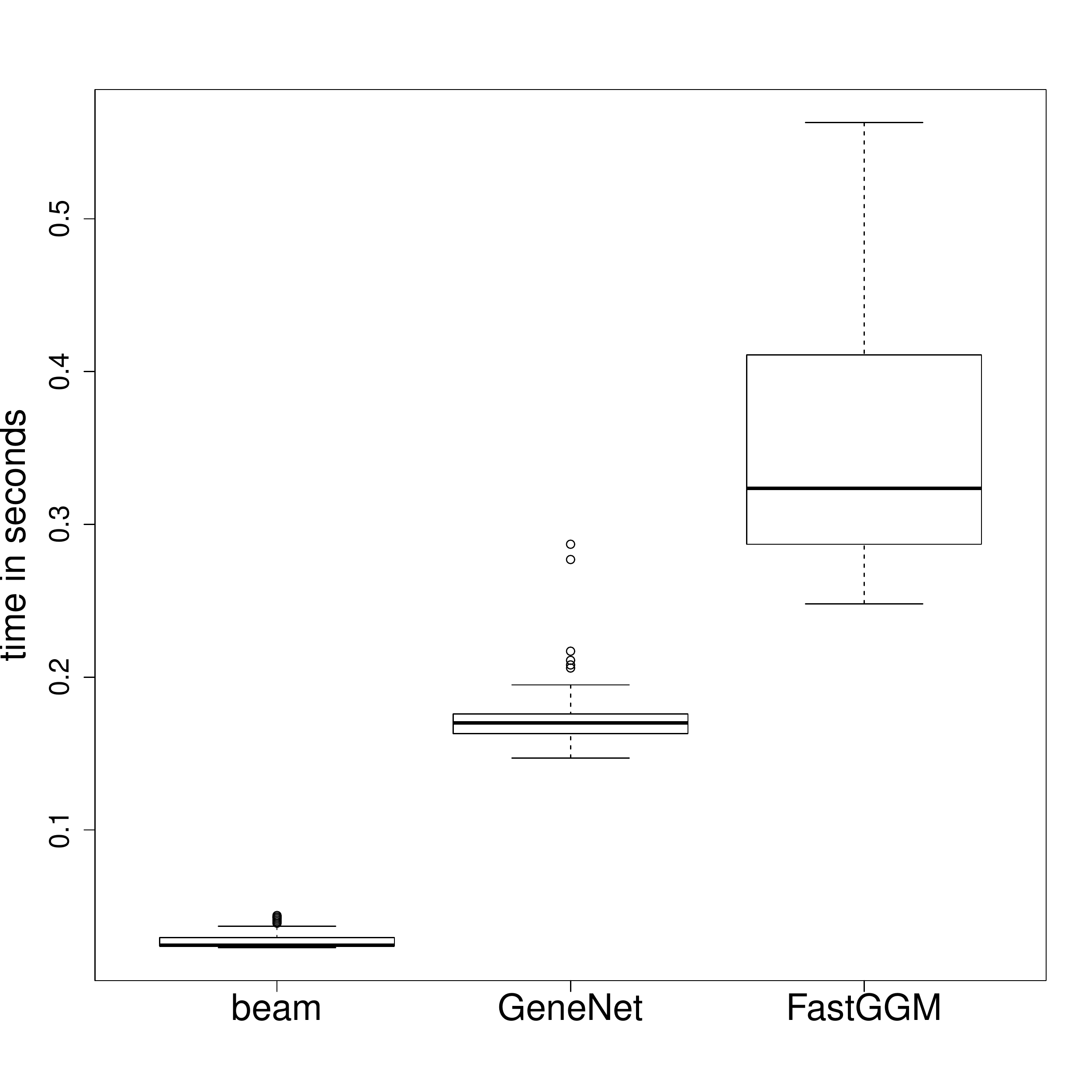}
  			\label{Plottime2:sub1}
  		}
  \end{minipage}\hfill
	\begin{minipage}[c]{0.33\linewidth}
    \centering
  		\subfigure[][$p=500$]{
  			\includegraphics[width=1.00\textwidth]{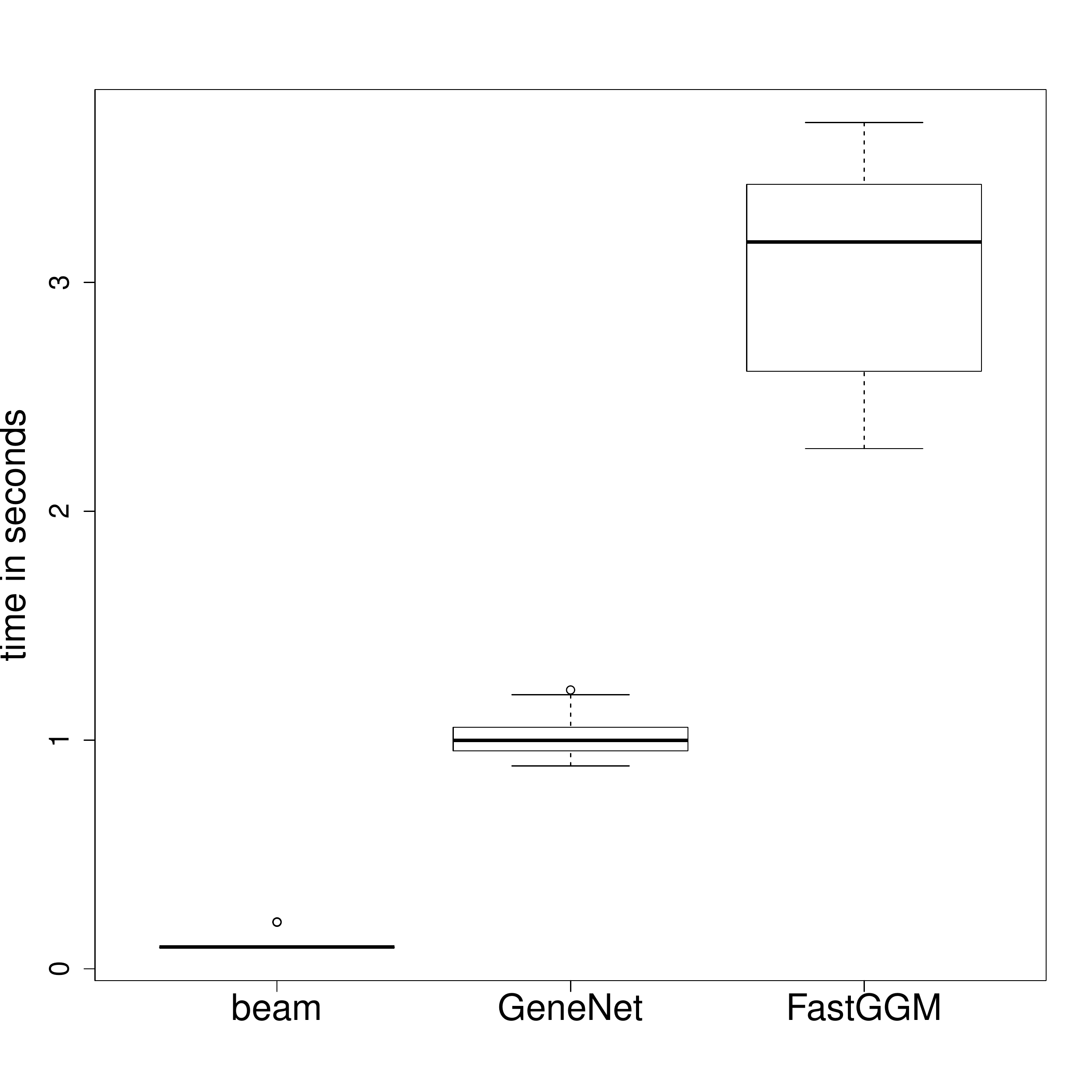}
  			\label{Plottime2:sub2}
  		}
  \end{minipage}\hfill
  \begin{minipage}[c]{0.33\linewidth}
    \centering
  		\subfigure[][$p=1000$]{
  			\includegraphics[width=1.00\textwidth]{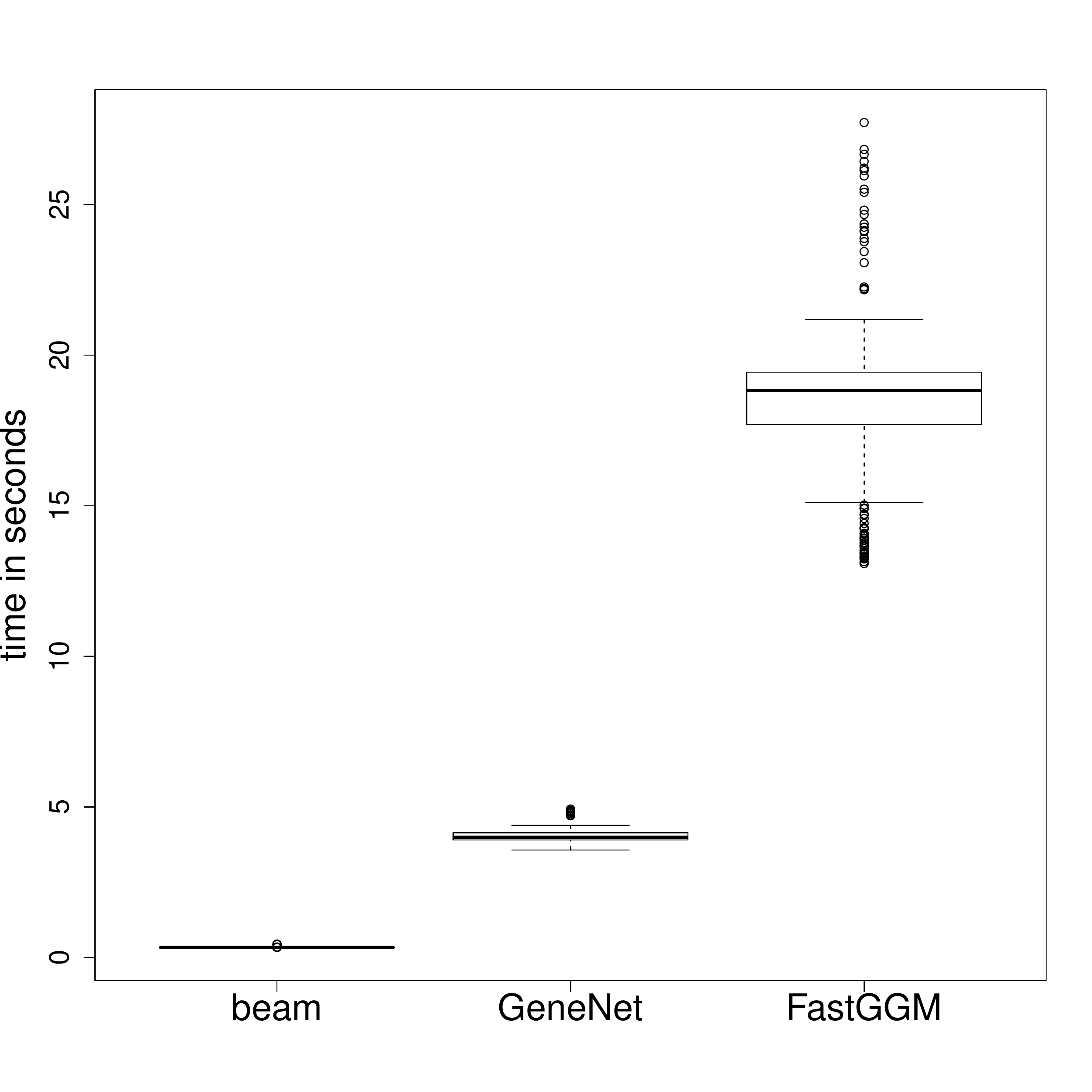}
  			\label{Plottime2:sub3}
  		}
  \end{minipage}\hfill
		\caption{Running time in seconds (assessed on 3.40GHz Intel Core i7-3770 CPU) for each method as a function of $p$.}
		\label{Plottime2}
\end{figure}

\newpage
\section{Gene network in Glioblastoma multiform}
\label{app}

We illustrate our method on a large gene expression data set on glioblastoma multiforme from The Cancer Genome Atlas. Glioblastoma multiform is an aggressive form of brain tumor in adults associated with poor prognosis. Level 3 normalized gene expression data (Agilent 244K platform) from 532 patients were obtained from The Cancer Genome Atlas Data Portal. The data comprise measurements of 17,814 genes, of which 14,827 can uniquely be identified in the PathwayCommons database. Missing expression values were imputed using the Bioconductor R package \textit{impute} (function \textit{impute.knn()} with default parameters) and the data standardized as described in section~\ref{bg:hyper}. A small subset of the data were analyzed in \citet{leday2017}. Instead, we here characterize globally the conditional independence structure between all 14,827 genes.

Figure~\ref{glio:sub1} displays the log-marginal likelihood of model~\eqref{GC} as a function of the prior parameter $\alpha$ when the prior matrix $T$ equals the identity. Using the empirical Bayes estimate of $\alpha$ we computed the Bayes factors and their associated tail probabilities for all pair of variables. These computations took 90 seconds overall on 3.40GHz Intel Core i7-3770 CPU without parallel schemes, which is remarkable for a graph with a total number of 109,912,551 possible edges.

The conditional independence graph identified by controlling the family-wise error rate at 10\% using the conservative Bonferroni procedure consists of 46,071 edges (0.042\% of the total number of edges). Edge degree varies from $0$ to $127$ with 9,675 genes having nonzero degrees. The degree distribution seems to follow an exponential distribution (see Figure~\ref{glio:sub1}), thereby indicating that a relative small number of genes have a large number of links.

\begin{figure}[h]
	\begin{minipage}[c]{0.5\linewidth}
    \centering
  		\subfigure[][]{
  			\includegraphics[width=1.00\textwidth]{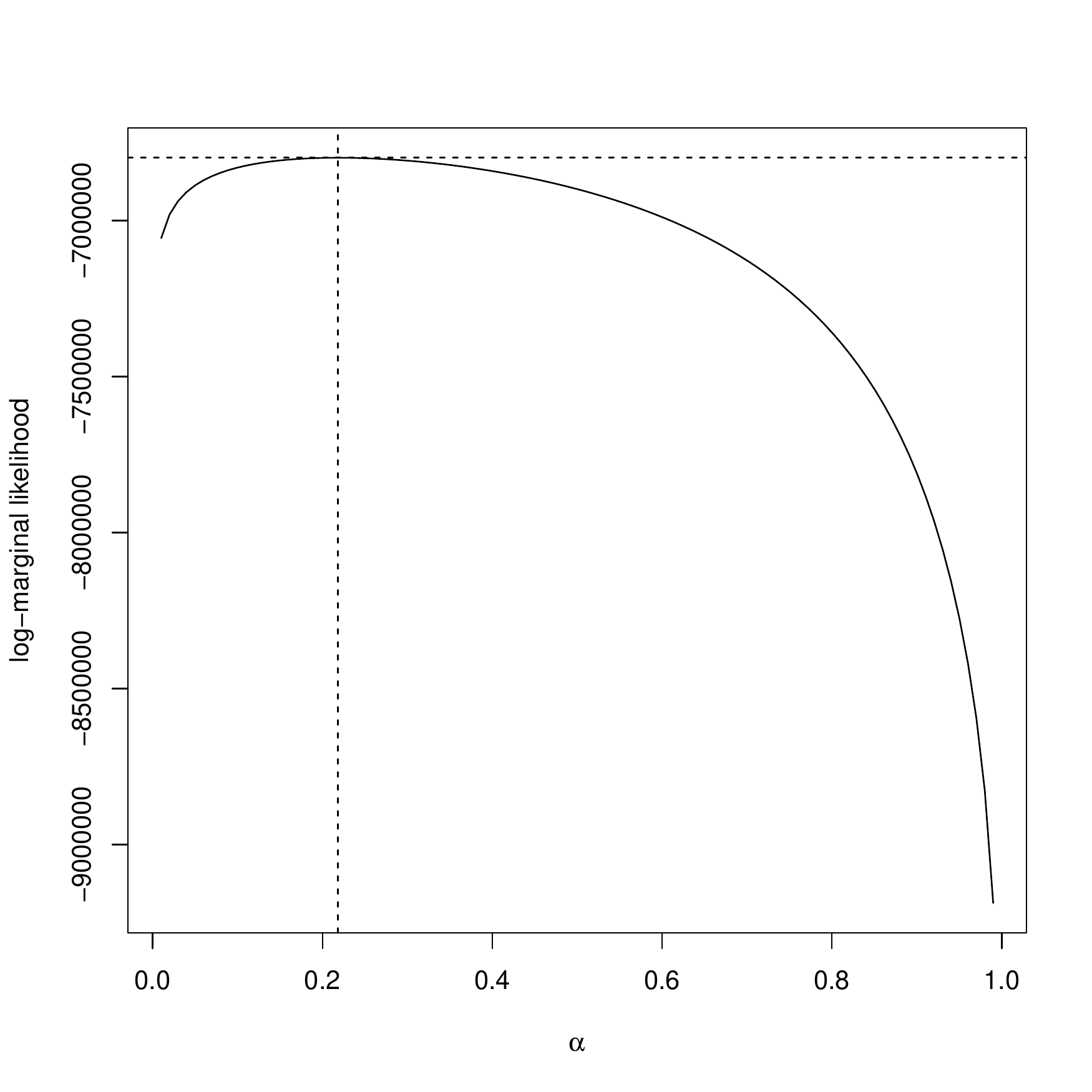}
  			\label{glio:sub1}
  		}
  \end{minipage}\hfill
	\begin{minipage}[c]{0.5\linewidth}
    \centering
  		\subfigure[][]{
  			\includegraphics[width=1.00\textwidth]{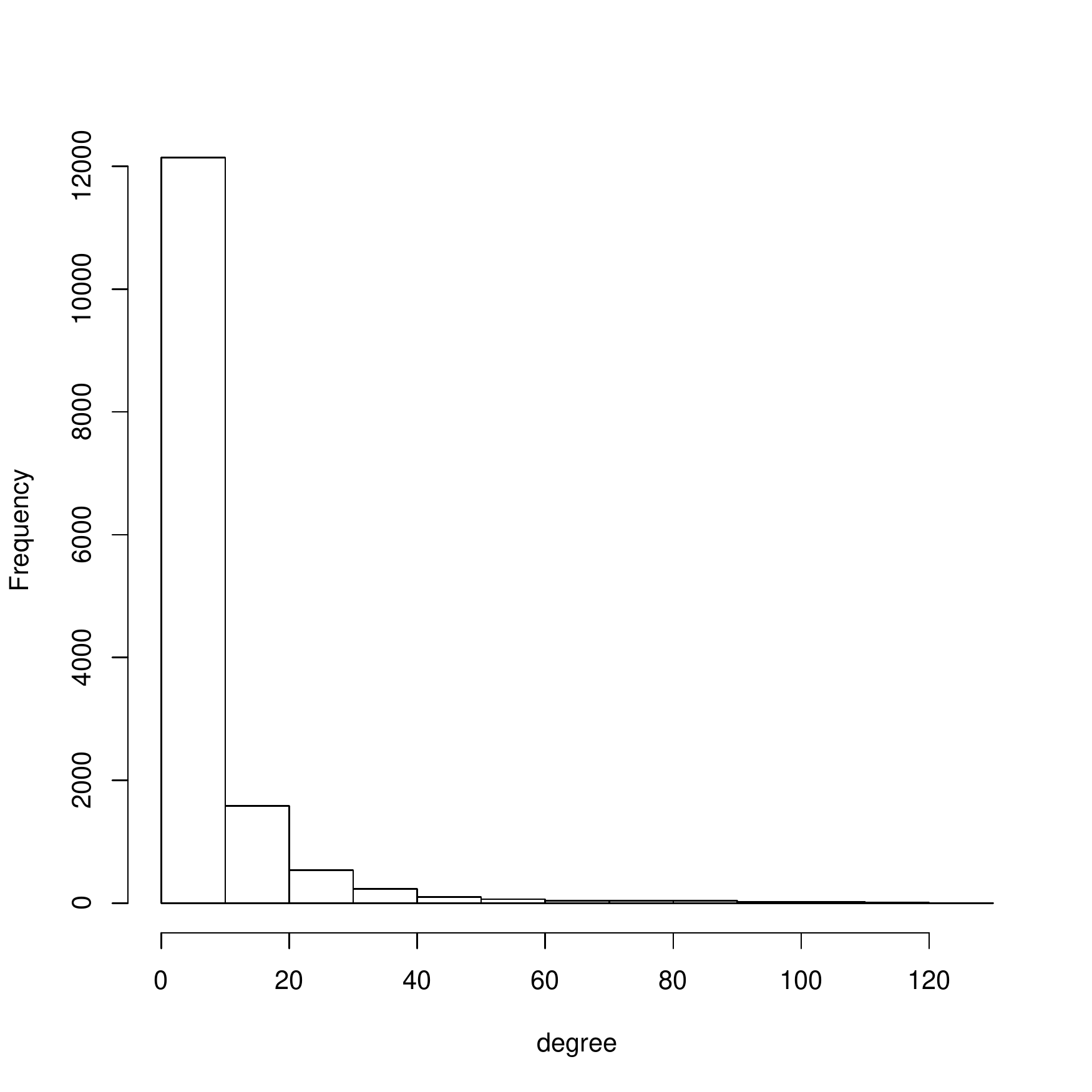}
  			\label{glio:sub2}
  		}
  \end{minipage}\hfill
		\caption{(a) Log-marginal likelihood of the Gaussian conjugate model as a function of $\alpha = (\delta-p-1)/(\delta+n-p-1)$; the vertical and horizontal dotted lines indicates the location of the optimum. (b) degree distribution of the conditional independence graph.}
		\label{glio}
\end{figure}

Because it is difficult to visualize the identified graph in its entirety, we determine clusters of densely connected edges using the algorithm of \citet{blondel2008} implemented in the R package igraph \citep{csardi2006}. The algorithm identifies a partition of edges that yield an overall modularity score equal to 0.91. The modularity score measures the quality of a division of a graph into sub-graphs. Its maximal value being 1, the identified partition presents a high modularity and suggests the presence of densely interconnected groups of nodes in the conditional independence graph. To illustrate this, we report two sub-graphs in Figure~\ref{graphs} that have been identified by the clustering algorithm and correspond to the HOXA and PCDHB gene families. The HOX gene family is known to be involved in the development of human cancers \citet{bhatlekar2014}, including Glioblastoma. The HOXA13 gene has for instance been advanced as potential diagnostic marker for Glioblastoma \citep{duan2015} and the role of HOXA9 gene in cell proliferation, apoptosis and drug resistance are under active research \citep{costa2010, gonccalves2016, bhatlekar2018}. On the other hand, the protocadherin beta (PCDHB) gene cluster, whose function is still poorly understood, have been reported to be associated with poor survival and tumour aggressiveness in Neuroblastoma \citep{banelli2015, lau2012}, another neural cancer. The particular methylation status of genes in the PCDHB family has been identified as a mechanism of transcriptional deregulation and associated with high-risk neuroblastoma biology \citep{henrich2016}.

\begin{figure}[h]
  \begin{minipage}[c]{0.5\linewidth}
    \centering
  		\subfigure[][HOXA gene cluster]{
  			\includegraphics[width=1\textwidth]{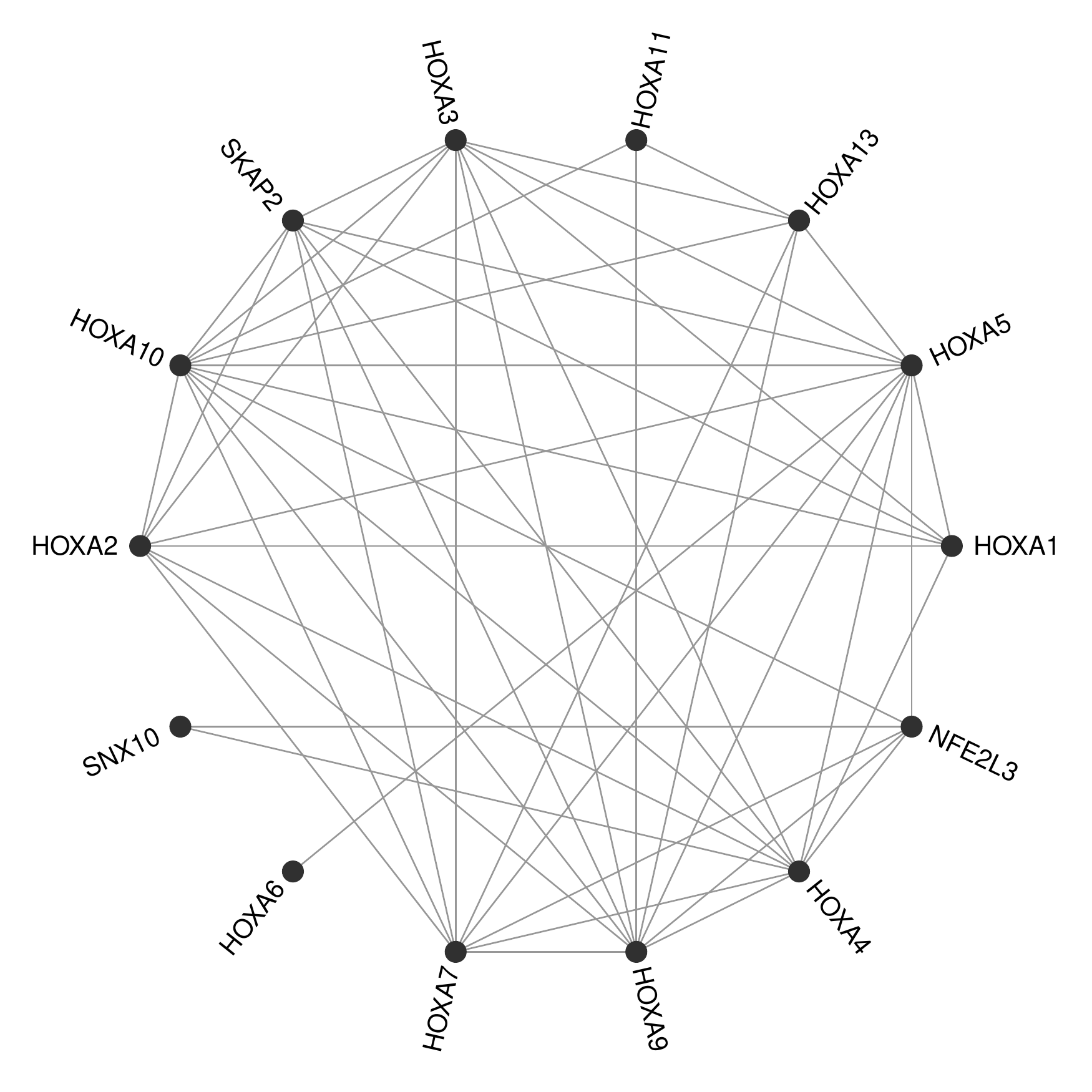}
  			\label{graphs:sub1}
  		}
  \end{minipage}\hfill
	\begin{minipage}[c]{0.5\linewidth}
    \centering
  		\subfigure[][PCDHB gene cluster]{
  			\includegraphics[width=1\textwidth]{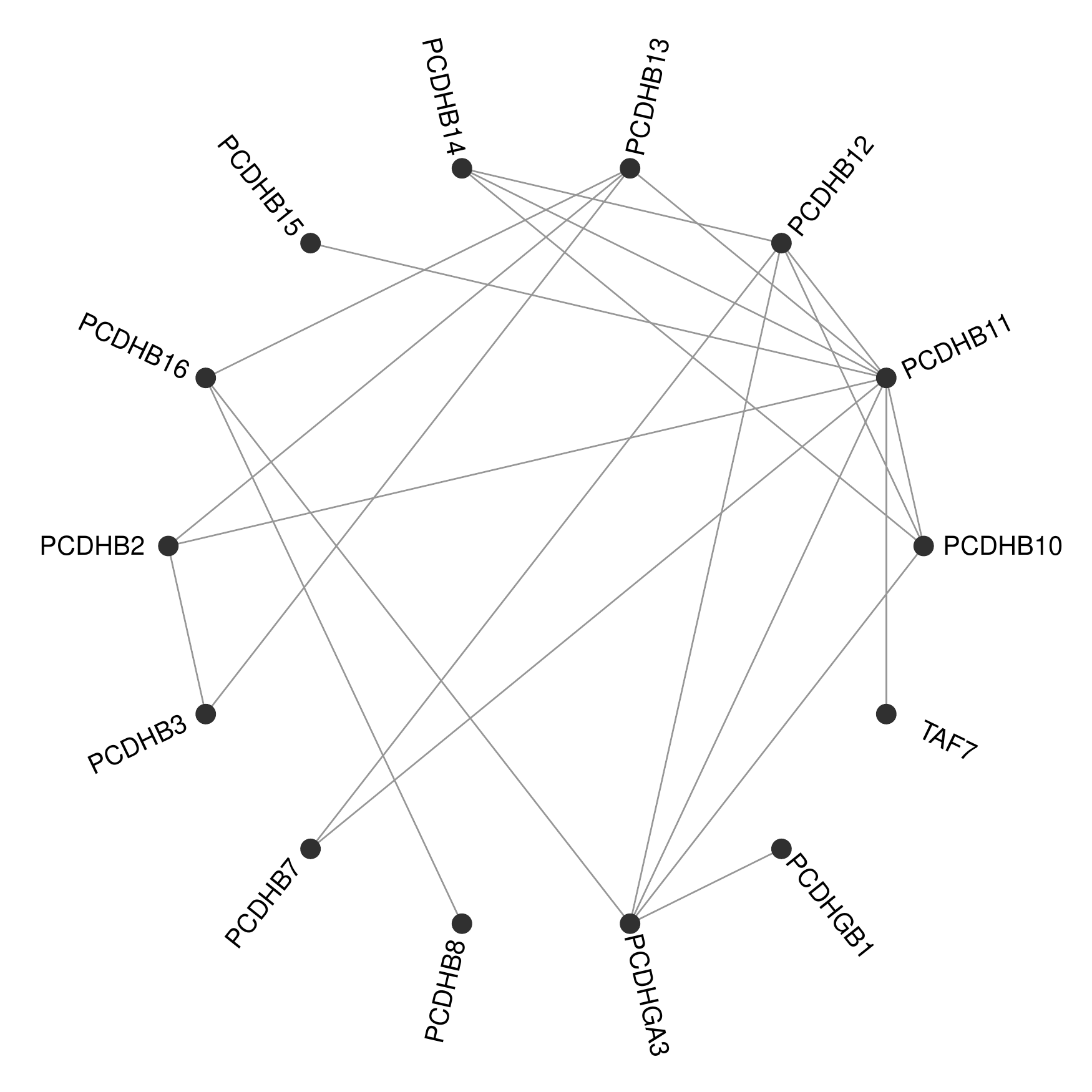}
  			\label{graphs:sub2}
  		}
  \end{minipage}\hfill
		\caption{Two examples of densely connected gene subgraphs identified by the clustering algorithm.}
		\label{graphs}
\end{figure}

\newpage
\section{Future work}
\label{disc}

We foresee several promising extensions of the proposed approach. The Bayes factors proposed in this paper can be used for differential network analysis in which the goal is to identify edges that are in common or specific to predefined groups of samples. Provided that samples between groups are independent, the Bayes factors can simply be multiplied across groups so as to obtain new Bayes factors that provide evidence towards the presence or absence of a common edge. Being symmetric, the Bayes factors can also be inverted before being multiplied so as to evaluate more complex hypotheses, e.g. edge losses or gains in a two-group comparison. Last, it would be interesting to derive the Bayes factor in a regression framework so as to compare them with that of \citet{zhou2017}.

\section*{Acknowledgements}

This research was supported by the Medical Research Council core funding number MRC\_MC\_UP\_0801/1 and grant number MR/M004421. The authors wish to thank Ilaria Speranza for helpful comments on the manuscript and improving largely the software. The first author also wishes to thank Catalina Vallejos and Leonardo Bottolo for helpful discussions.

\bibliographystyle{apalike}
\bibliography{refs}

\begin{appendices}
\section{Proofs}

This appendix contains the proofs for Lemmas~\ref{lemma1},~\ref{lemma2} and~\ref{lemma3}, as well as for Propositions~\ref{prop1},~\ref{prop2},~\ref{prop4},~\ref{prop5} and ~\ref{prop6}.\\

\begin{proof}[Proof of Proposition~\ref{prop1}]
	Let $\alpha_\delta = (\delta-p-1)/(\delta+n-p-1) \in (0,1)$ depends on $\delta$ with $n$ and $p$ fixed.
	\begin{enumerate}[(i),leftmargin=*]
		\item[(i)/(iii)] From $\widehat{\Sigma}_\delta= \alpha_\delta D + (1-\alpha_\delta)\widehat{\Sigma}_{\text{mle}}$ and the fact that $\lim_{\delta\rightarrow \infty} \alpha_\delta = 1$ and that $\lim_{\delta\rightarrow p+1} \alpha_\delta = 0$, it follows immediately that $\lim_{\delta\rightarrow \infty} \widehat{\Sigma}_\delta = D$ and $\lim_{\delta\rightarrow p+1} \widehat{\Sigma}_\delta = \widehat{\Sigma}_{\text{mle}}$ .
		\item[(ii)/(iv)] Rewrite $\widehat{\Omega}_\delta = \frac{\delta+n}{\delta+n-p-1} \widehat{\Sigma}_\delta^{-1},$
		then it clear that $\lim_{\delta\rightarrow \infty} \widehat{\Omega}_\delta = D^{-1}$ and that $\lim_{\delta\rightarrow p+1} \widehat{\Omega}_\delta = \left\lbrace(n+p+1)/n\right\rbrace\widehat{\Sigma}_{\text{mle}}^{-1}$, if $\widehat{\Sigma}_{\text{mle}}$ is positive definite.
		\item[(v)] Since by definition $x^T D x >0$ and $x^T \widehat{\Sigma}_{\text{mle}} x \geq 0$, $\forall x \in \mathbb{R}^p$, it follows that $x^T \widehat{\Sigma}_\delta x = \alpha_\delta x^T D x + (1-\alpha_\delta) x^T \widehat{\Sigma}_{\text{mle}} x >0$. Therefore, $\widehat{\Sigma}_\delta$ is positive definite for $0<\alpha_\delta \leq 1$ or equivalently $\delta > 0$. As a direct consequence, $\widehat{\Omega}_\delta = \frac{\delta+n}{\delta+n-p-1} \widehat{\Sigma}_\delta^{-1}$ is also found positive definite.
	\end{enumerate}
\end{proof}
\begin{proof}[Proof of Proposition~\ref{prop2}]
	Let $\alpha_n = (\delta-p-1)/(\delta+n-p-1) \in (0,1)$ depends on $n$ with $\delta$ and $p$ fixed.
	\begin{enumerate}[(i),leftmargin=*]
		\item[(i)] From $\widehat{\Sigma}_\delta= \alpha_n D + (1-\alpha_n)\widehat{\Sigma}_{\text{mle}}$ and the fact that $\lim_{n\rightarrow \infty} \alpha_n = 0$, it follows immediately that $\lim_{n\rightarrow \infty} \widehat{\Sigma}_n = \widehat{\Sigma}_{\text{mle}}$.
		\item[(ii)] Rewriting $\widehat{\Omega}_n = \frac{\delta+n}{\delta+n-p-1} \widehat{\Sigma}_\delta^{-1}$ it is clear that $\lim_{n\rightarrow \infty} \widehat{\Omega}_n =  \widehat{\Sigma}_{\text{mle}}^{-1}$.
	\end{enumerate}
\end{proof}
\begin{proof}[Proof of Lemma~\ref{lemma1}]
    The numerator of the Bayes factor is
\begin{equation}
	\label{num2}
	\begin{split}
		  \iint p_1(Y_a \mid Y_b, B_{a\mid b}, \Sigma_{aa.b}) p_1(B_{a\mid b}, \Sigma_{aa.b}) dB_{a\mid b} d\Sigma_{aa.b} =
		\frac{\vert F_{bb} \vert \vert F_{aa.b} \vert^{\frac{\delta}{2}} \Gamma_2\left( \frac{\delta+n}{2} \right)}{\pi^{n} \Gamma_2\left(\frac{\delta}{2}\right) \vert T_{bb} \vert \vert T_{aa.b} \vert^{\frac{\delta+n}{2}} }
	\end{split}.
\end{equation}
Under $\text{H}_{0,ij}^{\text{C}}$, the model likelihood is
\begin{equation*}
	\begin{split}
		p_0(Y_a \vert Y_b , B_{a\vert b}, \Sigma_{aa.b}^0) &= (2\pi)^{-n} (\omega_{ii}\omega_{jj})^{\frac{n}{2}} \exp\left\lbrace -\frac{1}{2} \omega_{ii} (Y_a^{(i)} - Y_b B_{a\vert b}^{(i)} )^T (Y_a^{(i)} - Y_b B_{a\vert b}^{(i)}) \right\rbrace\\
		& \quad\times \exp\left\lbrace -\frac{1}{2} \omega_{jj} (Y_a^{(j)} - Y_b B_{a\vert b}^{(j)} )^T (Y_a^{(j)} - Y_b B_{a\vert b}^{(j)}) \right\rbrace
	\end{split},
\end{equation*}
and the probability density of $(\Sigma_{aa.b}^0, B_{a\vert b})$ is
\begin{equation*}
	\begin{split}
		p_0(\Sigma_{aa.b}^0, B_{a\vert b}) & = \frac{\vert F_{bb} \vert \left(g_{ii} g_{jj}\right)^{\frac{\delta+1}{2}}}{\pi^{(p-2)} 2^{\delta+p-1} \Gamma^2\left(\frac{\delta+1}{2}\right)} (\omega_{ii} \omega_{jj})^{\frac{\delta+p+1}{2}} \exp\left\lbrace -\frac{\omega_{ii} g_{ii} + \omega_{jj} g_{jj}}{2} \right\rbrace \\
		& \times \exp\left\lbrace -\frac{1}{2} \left(B_{a\vert b}^{(i)} - F_{a\vert b}^{(i)} \right)^T \omega_{ii} F_{bb} \left(B_{a\vert b}^{(i)} - F_{a\vert b}^{(i)} \right) \right\rbrace \\
		& \times \exp\left\lbrace -\frac{1}{2} \left(B_{a\vert b}^{(j)} - F_{a\vert b}^{(j)} \right)^T \omega_{jj} F_{bb} \left(B_{a\vert b}^{(j)} - F_{a\vert b}^{(j)} \right) \right\rbrace
	\end{split},
\end{equation*}
where $B_{a\vert b}^{(l)}$ and $Y_a^{(l)}$ represent the $l^{\text{th}}$ column of $B_{a\vert b}$ and $Y_a$. Therefore,
\begin{equation}
	\label{denum2}
	\begin{split}
		\iint p_0(Y_a \vert Y_b , B_{a\vert b}, \Sigma_{aa.b}^0)  & p_0(\Sigma_{aa.b}^0, B_{a\vert b}) dB_{a\vert b} d\Sigma_{aa.b}^0 = 
		\frac{\vert F_{bb} \vert \Gamma^2\left( \frac{\delta+n+1}{2}\right) \left(g_{ii}g_{jj}\right)^{\frac{\delta+1}{2}} }{\pi^{n} \vert T_{bb} \vert \Gamma^2\left(\frac{\delta+1}{2}\right)  (q_{ii} q_{jj})^{\frac{\delta+n+1}{2}}}
	\end{split}.
\end{equation}
Combining \eqref{num2} and \eqref{denum2} we obtain the Bayes factor in Lemma~\ref{lemma1}.
\end{proof}
\null\bigskip

\begin{proof}[Proof of Lemma~\ref{lemma2}]
    The numerator of the Bayes factor is
\begin{equation}
		\label{num10}
		\int p_1(Y_a \vert \Sigma_{aa}) p_1(\Sigma_{aa}) d\Sigma_{aa} =
		\frac{\vert F_{aa} \vert^{\frac{\delta-p+2}{2}} \Gamma_2\left( \frac{\delta+n-p+2}{2} \right)}{\pi^{n} \Gamma_2\left( \frac{\delta-p+2}{2} \right) \vert T_{aa} \vert^{\frac{\delta+n-p+2}{2}}} .
\end{equation}	
Under $\text{H}_{0,ij}^{\text{M}}$, the model likelihood is
$$
		p_0(Y_a \vert \Sigma_{aa}^0) = (2\pi)^{-n} (\sigma_{ii} \sigma_{jj})^{-\frac{n}{2}} \exp\left\lbrace -\frac{1}{2} \left( \sigma_{ii}^{-1} s_{ii} + \sigma_{jj}^{-1} s_{jj}\right) \right\rbrace,
$$
and the probability density of $\Sigma_{aa}^0$ is
$$
		p_0(\Sigma_{aa}^0) = \frac{(f_{ii} f_{jj})^\frac{\delta-p+3}{2}}{2^{\delta+1} \Gamma^2\left(\frac{\delta-p+3}{2}\right)} (\sigma_{ii} \sigma_{jj})^{-\frac{\delta-p+5}{2}} \exp\left\lbrace -\frac{1}{2} \left( f_{ii} \sigma_{ii}^{-1} + f_{jj} \sigma_{jj}^{-1} \right) \right\rbrace,
$$
As result,
\begin{equation}
	\label{denum10}
	\begin{split}
	\int p_0(Y_a &\vert \Sigma_{aa}^0) p_0(\Sigma_{aa}^0) d\Sigma_{aa}^0 = 
	\frac{ \Gamma^2\left( \frac{\delta+n-p+3}{2} \right) (f_{ii} f_{jj})^\frac{\delta-p+3}{2}}{\pi^{n} \Gamma^2\left(\frac{\delta-p+3}{2}\right) (t_{ii} t_{jj})^\frac{\delta+n-p+3}{2}}
	\end{split}.
\end{equation}
Combining \eqref{num10} and \eqref{denum10} we obtain the Bayes factor in Lemma~\ref{lemma2}.
\end{proof}
\begin{proof}[Proof of Lemma~\ref{lemma3}]
Using Stirling's formula, the gamma function can asymptotically be approximated by
$$ \Gamma(\gamma_1 x + \gamma_2) \approx \sqrt{2\pi}\exp\left\lbrace -\gamma_1 x\right\rbrace (\gamma_1 x)^{\gamma_1 x+\gamma_2 -1/2},$$
for large values of $x$ \citep{wangMaruyama2016}. This means that for large values of $n$,
\begin{equation*}
	\begin{split}
	\Gamma_2\left( \frac{\delta+n-p+2}{2} \right) &\approx 2 \pi^{3/2} \exp\left\lbrace-n\right\rbrace \left( \frac{n}{2} \right)^{\frac{2n+2\delta-2p+1}{2}} 
	\end{split}
\end{equation*}
and
\begin{equation*}
	\begin{split}
	\Gamma^2\left( \frac{\delta+n-p+3}{2} \right) &\approx 2\pi \exp\left\lbrace-n\right\rbrace \left( \frac{n}{2} \right)^{n+\delta-p+2} 
	\end{split}.
\end{equation*}
The Bayes factor in Lemma~\ref{lemma2} is therefore asymptotically equivalent to
\begin{equation*}
	\begin{split}
	\text{BF}_{ij}^{\text{M}} &\approx \frac{\Gamma_2\left( \frac{\delta+n-p+2}{2} \right)}{\Gamma^2\left( \frac{\delta+n-p+3}{2} \right)} \ \left(1-r_{s_{ij}}^{2}\right)^{-\frac{\delta+n-p+2}{2}} \approx \frac{(1-r_{s_{ij}}^{2})^{-\frac{\delta+n-p+2}{2}}}{n^{3/2}}
	\end{split}
\end{equation*}
Now when $\text{H}^{\text{M}}_{0,ij}$ is true, $\lim_{n\rightarrow\infty} r^2_{s_{ij}}=0$ because the sample correlation is asymptotically unbiased. Hence, $\lim_{n\rightarrow\infty} (1-r_{s_{ij}}^{2})^{-\frac{\delta+n-p+2}{2}}=1$. Since $\lim_{n\rightarrow\infty} n^{3/2}=\infty$, we can conclude that $\lim_{n\rightarrow\infty} \text{BF}_{ij}^{\text{M}}=0$, which proves the consistency under $\text{H}^{\text{M}}_{0,ij}$. On the other hand, when $\text{H}^{\text{M}}_{1,ij}$ is true, $\lim_{n\rightarrow\infty} (1-r_{s_{ij}}^{2})^{-1}=c_1$, $c_1>0$. The application of L'H\^{o}pital's rule twice, by deriving the numerator and denominator twice with respect to $n$, allows us to conclude that $\lim_{n\rightarrow\infty} \text{BF}_{ij}^{\text{M}}=\infty$. This completes our proof for the consistency of $\text{BF}_{ij}^{\text{M}}$.

Using similar arguments we prove the consistency of the Bayes factor in Lemma~\ref{lemma1}. The latter is asymptotically equivalent to
\begin{equation*}
	\begin{split}
	\text{BF}_{ij}^{\text{C}} &\approx \frac{\Gamma\left( \frac{\delta+n}{2} \right) \Gamma\left( \frac{\delta+n-1}{2} \right)}{\Gamma^2\left( \frac{\delta+n+1}{2} \right)}  (1-r_{q_{ij}}^2)^{-\frac{\delta+n}{2}} \approx \frac{(1-r_{p_{ij}}^2)^{-\frac{\delta+n}{2}}}{n^{3/2}},
	\end{split}
\end{equation*}
where $r_{p_{ij}}$ denotes the sample partial correlation between variables $i$ and $j$. On one hand, when $\text{H}^{\text{C}}_{0,ij}$ is true $\lim_{n\rightarrow\infty} r^2_{p_{ij}}=0$ because the sample partial correlation is asymptotically unbiased. This implies that $\lim_{n\rightarrow\infty} (1-r_{p_{ij}}^{2})^{-\frac{\delta+n-p+2}{2}}=1$. And because $\lim_{n\rightarrow\infty} n^{3/2}=\infty$ we can conclude that $\lim_{n\rightarrow\infty} \text{BF}_{ij}^{\text{C}}=0$. On the other hand, when $\text{H}^{\text{C}}_{1,ij}$ is true, $\lim_{n\rightarrow\infty} (1-r_{p_{ij}}^{2})^{-1}=c_2$, $c_2>0$. Therefore, by applying L'H\^{o}pital's rule twice, as above, it is found that $\lim_{n\rightarrow\infty} \text{BF}_{ij}^{\text{C}}=\infty$, which completes our proof for the consistency of $\text{BF}_{ij}^{\text{C}}$.
\end{proof}
\null \medskip
\begin{proof}[Proof of Proposition~\ref{prop4}]
It follows directly from the probability density of $\Phi$ that $p(\varphi \mid \rho = 0) \propto (1-\varphi^2)^{\frac{d-3}{2}}$, which implies that $p(\varphi^2 \mid \rho = 0)\propto \varphi^{-1}(1-\varphi^2)^{\frac{d-3}{2}}$.
\end{proof}
\null \medskip
\begin{proof}[Proof of Proposition~\ref{prop5}]
	After some algebra we have
	\begin{equation}
	\label{equality0}
	\begin{split}
	Y_a^T Y_a - \bar{B}_{a\mid b}^T &(Y_b^T Y_b + F_{bb}) \bar{B}_{a\mid b} + F_{ab} F_{bb}^{-1} F_{ba}= \\
	& Y_a^T \left\lbrace I_n - Y_b \left(Y_b^T Y_b + F_{bb} \right)^{-1} Y_b^T \right\rbrace Y_a \\
	& - 2 Y_a^T Y_b \left(Y_b^T Y_b + F_{bb}\right)^{-1} F_{bb} F_{a\vert b} \\
	& + F_{a\vert b}^T \left\lbrace F_{bb}-F_{bb}\left(Y_b^T Y_b + F_{bb}\right)^{-1}F_{bb}\right\rbrace F_{a\vert b},
	\end{split}
	\end{equation}
where, we recall, $\bar{B}_{a\mid b}=(Y_b^T Y_b + F_{bb})^{-1} (Y_b^T Y_a + F_{ba})$.
Now,
	\begin{equation}
	\label{equality1}
	 Y_b \left(Y_b^T Y_b + F_{bb}\right)^{-1} F_{bb} = \left\lbrace I_n - Y_b \left(Y_b^T Y_b + F_{bb} \right)^{-1} Y_b^T \right\rbrace Y_b ,
	\end{equation}
and, using twice the Sherman-Morrison-Woodbury matrix identity \citep[Theorem 1.2.3.iv]{gupta2000}, it is found that
	\begin{equation}
	\label{equality2}
	\begin{split}
	F_{bb}-F_{bb}\left(Y_b^T Y_b + F_{bb}\right)^{-1}F_{bb} &=  \left\lbrace F_{bb}^{-1} + (Y_b^T Y_b)^{-1} \right\rbrace^{-1} \\
	&=Y_b^T \left\lbrace I_n - Y_b \left(Y_b^T Y_b + F_{bb} \right)^{-1} Y_b^T \right\rbrace Y_b .
	\end{split}
	\end{equation}
Thus, by plugin in~\eqref{equality1} and~\eqref{equality2} in~\eqref{equality0}, we obtain
	\begin{equation*}
	\begin{split}
	 Y_a^T Y_a - \bar{B}_{a\mid b}^T & (Y_b^T Y_b + F_{bb}) \bar{B}_{a\mid b} + F_{ab} F_{bb}^{-1} F_{ba} = \\
	& \left(Y_a - Y_b F_{a\vert b}\right)^T \left\lbrace I_n - Y_b \left(Y_b^T Y_b + F_{bb}\right)^{-1} Y_b^T\right\rbrace^{-1} \left(Y_a - Y_b F_{a\vert b}\right),
	\end{split}
	\end{equation*}
which is further reduced to
	\begin{equation*}
	\begin{split}
	Y_a^T Y_a - \bar{B}_{a\mid b}^T & (Y_b^T Y_b + F_{bb}) \bar{B}_{a\mid b} + F_{ab} F_{bb}^{-1} F_{ba} = \\
	& \left(Y_a - Y_b F_{a\vert b}\right)^T \left(I_n + Y_b F_{bb}^{-1} Y_b^T\right)^{-1} \left(Y_a - Y_b F_{a\vert b}\right),
	\end{split}
	\end{equation*}
using, again, the Sherman-Morrison-Woodbury matrix identity.
\end{proof}
\null \medskip
\begin{proof}[Proof of Proposition~\ref{prop6}]
	Consider the case where $\Sigma_{aa.b}$ is fixed, then the joint density of model~\eqref{model1alt} is
\begin{equation*}
\begin{split}
	p(Y_a, B_{a\mid b}\mid Y_b, \Sigma_{aa.b}) &\propto \exp\left\lbrace -\frac{1}{2} \text{tr}\left[ \Sigma_{aa.b}^{-1} \left( \left( Y_a - Y_b B_{a\mid b} \right)^T \left( Y_a - Y_b B_{a\mid b} \right) \right. \right. \right. \\
	& \qquad\qquad\qquad\left.\left.\left. + \left( B_{a\mid b} - F_{a\mid b} \right)^T \left( B_{a\mid b} - F_{a\mid b} \right)\right) \right] \right\rbrace \\
	&= \exp\left\lbrace -\frac{1}{2} \text{tr}\left[ \Sigma_{aa.b}^{-1} \left(\left( B_{a\mid b} - \bar{B}_{a\mid b} \right)^T \left( B_{a\mid b} - \bar{B}_{a\mid b} \right) \right. \right. \right. \\
	& \qquad\qquad\qquad\left.\left.\left. + Y_a^T Y_a - \bar{B}_{a\mid b}^T \left(Y_b^T Y_b + F_{bb}\right) \bar{B}_{a\mid b} + F_{ab} F_{bb}^{-1} F_{ba}\right) \right] \right\rbrace \\
	&= \exp\left\lbrace -\frac{1}{2} \text{tr}\left[ \Sigma_{aa.b}^{-1} \left(\left( B_{a\mid b} - \bar{B}_{a\mid b} \right)^T \left( B_{a\mid b} - \bar{B}_{a\mid b} \right) \right. \right. \right. \\
	& \qquad\qquad\qquad\left.\left.\left. + \left(Y_a - Y_b F_{a\vert b}\right)^T \left(I_n + Y_b F_{bb}^{-1} Y_b^T\right)^{-1} \left(Y_a - Y_b F_{a\vert b}\right) \right) \right] \right\rbrace .
\end{split}
\end{equation*}
Here the last equality is obtained using Proposition~\ref{prop5}. As a result,
\begin{equation*}
\begin{split}
p(Y_a \mid Y_b, \Sigma_{aa.b}) &= \int p(Y_a, B_{a\mid b}\mid Y_b, \Sigma_{aa.b}) dB_{a\mid b}\\
&\propto \exp\left\lbrace -\frac{1}{2} \text{tr}\left[ \Sigma_{aa.b}^{-1} \left(Y_a - Y_b F_{a\vert b}\right)^T \left(I_n + Y_b F_{bb}^{-1} Y_b^T\right)^{-1} \left(Y_a - Y_b F_{a\vert b}\right) \right] \right\rbrace ,
\end{split}
\end{equation*}
and $\text{vec}(Y_a) \mid Y_b, \Sigma_{aa.b} \sim N_{n\times 2}\left(\text{vec}(Y_b F_{a\mid b}), \Sigma_{aa.b} \otimes \left(I_n + Y_b F_{bb}^{-1} Y_b^T\right)\right)$.

Now, by the scaling property of the multivariate Gaussian distribution it is found that
$$\text{vec}\left\lbrace\left(I_n + Y_b F_{bb}^{-1} Y_b^T\right)^{-\frac{1}{2}}\left(Y_a - Y_b F_{a\mid b} \right) \right\rbrace \mid \Sigma_{aa.b} \sim N_{n\times 2}\left(0, \Sigma_{aa.b} \otimes I_n \right),$$
and it follows that \citep[Theorem 3.2.2.]{gupta2000}
$$(Y_a - Y_b F_{a\vert b})^T (I_n + Y_b F_{bb}^{-1} Y_b^T)^{-1} (Y_a - Y_b F_{a\vert b})\sim W_2(\Sigma_{aa.b}, n).$$

\end{proof}

\end{appendices}

\end{document}